\newtheorem{defn}{Definition}
\newtheorem{lemma}{Lemma}
\newtheorem{assum}{Assumption}
\newtheorem{theorem}{Theorem}
\newtheorem{remark}{Remark}
\begin{document}
\title{Completely Uncoupled Algorithms for \\ Network Utility Maximization }
\author{\IEEEauthorblockN{S.Ramakrishnan\IEEEauthorrefmark{1},Venkatesh Ramaiyan\IEEEauthorrefmark{2} \\}
\IEEEauthorblockA{Department of Electrical Engineering, IIT Madras\\ Email: \IEEEauthorrefmark{1}ee12d036@ee.iitm.ac.in, \IEEEauthorrefmark{2}rvenkat@ee.iitm.ac.in,
}\thanks{A part of this work appeared in the proceedings of International Conference on Signal Processing and Communications (SPCOM), Bangalore, 2016, pp. 1-5. DOI:10.1109/SPCOM.2016.7746656}}

\maketitle
\algnewcommand{\Initialize}[1]{ 
  \State \textbf{Initialize:}
  \Statex \hspace*{\algorithmicindent}\parbox[t]{.8\linewidth}{\raggedright #1}
}
\begin{abstract}
In this paper, we present two completely uncoupled algorithms for utility 
maximization. In the first part, we present an algorithm that can be 
applied for general non-concave utilities. We show that this algorithm induces a 
perturbed (by $\epsilon$) Markov chain, whose stochastically stable 
states are the set of actions that maximize the sum utility. 
In the second part, we present an approximate sub-gradient algorithm 
for concave utilities which is considerably faster and requires lesser memory. 
We study the performance of the sub-gradient algorithm
 for decreasing and fixed step sizes. We show that, for 
decreasing step sizes, the Cesaro averages of the utilities converges to a 
neighborhood of the optimal sum utility. For constant step size, we show that the 
time average utility converges to a neighborhood of the optimal sum utility. 
Our main contribution is the expansion of the achievable rate region,
which has been not considered in the prior literature on completely uncoupled 
algorithms for utility maximization.
This expansion aids in allocating a fair share of resources  to the nodes which
is important in applications like channel selection,
user association and power control.  
\end{abstract}
\section{Introduction}
Radio resource allocation is an important problem in infrastructure, ad-hoc and 
sensor networks \cite{georgiadis2006resource}. In particular we need to address 
the  following resource allocation problems, viz channel selection, user association 
and power control. Channel selection and power control are essential for the efficient
 use of radio resources, whereas user association deals with efficient use of
 deployed Access Points. The solution should cater to the following
 objectives: $(i)$ Network throughput optimality be ensured $(ii)$ Users get a fair
 share of the network throughput $(iii)$ Ease of implementable. 
 With the advances in 5G wireless systems, it is predicted
  that there will be a phenomenal
 increase in the number of access points \cite{6824752}. Added to this, we have 
 coexisting radio technologies like LTE and Wifi \cite{7143339}. 
In such scenarios, centralized solution is unsuitable due to a large overhead.
Further, centralized control is also impractical in a heterogeneous setup.
Thus robust and easy to implementable distributed solutions are desirable.

\par \indent In this paper, we provide solutions to the above problems with the 
stated objectives. We arrive at such a solution using the approach in
 \cite{pradelski2012learning}, which the authors call as completely uncoupled 
 learning. In completely uncoupled learning, nodes' decisions are based only on their 
 past actions and utilities. In \cite{marden2014achieving}, Marden et al propose a 
 completely uncoupled algorithm that maximizes the sum-payoff (which translates to maximizing sum-throughput in wireless networks). 
 An important attribute to consider in radio resource allocation is fairness 
 among nodes, i.e. every node should get a fair share of the network  throughput.  
In this paper, we consider the problem of  utility maximization, 
where maximizing some utility functions have a notions of fairness \cite{5461911}.
 We propose two completely uncoupled algorithms that 
 maximize the sum-utility of the nodes.  
In our first algorithm, we discretize the rate region, thereby we pose
the utility maximization as a combinatorial optimization problem.
 This algorithm applies to general utilities, not necessarily concave.  
 In our second algorithm, we propose an approximate sub-gradient algorithm 
 for maximizing concave utilities. The main contribution of our work is to 
 provide flexibility in operating at any point in the interior of the rate region.

 Our algorithms are general and can be applied to any general network utility 
 maximization not restricted to wireless networks. We present our 
 algorithm for a general network, while bearing the above stated applications in mind.

\subsection{Related Literature}
Tassiulas and Ephremides proposed the Max-weight algorithm in 
\cite{182479}. The Max-weight algorithm 
 can stabilize any arrival rate within the rate region \cite{182479}. 
 Proportional fair scheduler was shown to optimize logarithmic utility
  function in \cite{1310314}. 
 In \cite{4455486}, Neely et al proposed an algorithm that could stabilize
 any arrival rate within the rate region and optimize a concave utility for arrival rates
 exceeding the rate region. The main drawback of the Max-weight algorithm, used in 
\cite{182479,1310314,4455486}, is its complexity and 
centralized nature. 
 
\par Maximal scheduling algorithms, having low complexity,  
  could support only a fraction of the rate region \cite{4439837}. Greedy algorithms 
  such as longest queue first scheduling, are optimal only for a class of network topologies \cite{Joo:2009:UCR:1618562.1618572}, 
  \cite{Leconte:2011:IBT:2042972.2042980}. 
 Distributed algorithms based on Gibbs sampling were proposed for IEEE 802.11 WLANs  in \cite{4215753}, for channel selection and user association. 
A proportional fair resource allocation algorithm for channel selection and user 
association was proposed in \cite{6636117}. Both \cite{4215753} and \cite{6636117} 
 require neighbor information exchange (or knowledge) and are applicable only to 
some tailored utilities.
 
 \par In \cite{DBLP:journals/ton/JiangW10}, Jiang and Walrand  proposed 
distributed scheduling algorithms for a conflict graph model without collisions.  
 They proved that their algorithms are optimal assuming time scale separation. 
 In \cite{liu2010towards}, Liu et al showed that stochastic approximation 
  \cite{borkar2006stochastic} leads to time scale separation when 
   the update parameters are bounded. In \cite{5625654}, Jiang et al  proposed 
  distributed scheduling algorithms and showed them to be optimal without 
  time scale  separation or bounded parameters assumption.  
  A discrete time version called Q-CSMA was proposed by Jian Ni et al in 
  \cite{6097082},  where collision free schedules are generated by considering a 
  control phase, thereby allowing multiple links to change their state.   
  SINR model was assumed in \cite{7145489}, where the authors show that 
  any arrival rate in the interior of the rate region could be supported. 
 


\par Resource allocation problems have been studied as cooperative and 
 non-cooperative games. In repeated prisoners dilemma with selfish players, 
attempts were made to induce cooperation in 
\cite{Kraines1989,axelrod1981evolution,nowak1993strategy}.  
In \cite{nowak1993strategy}, Pavlov method was proposed, which is 
win-stay lose-shift. In our model, we do not assume that the nodes (players) 
are selfish. Rather, we assume that the nodes (players) cooperate,
however are restricted in information, i.e. they do not know the actions or 
payoffs of other nodes (players). A similar assumption is used 
in \cite{pradelski2012learning,marden2014achieving} for maximizing the sum
payoff in a completely distributed  manner.
 
\par In \cite{MONDERER1996124}, Monderer and Shapley showed the existence
 of Nash equilibrium for potential games and  convergence of better reply dynamics to
 Nash equilibrium. Blume studied the interactions of players residing on an infinite 
 lattice  in \cite{BLUME1993387}, where players choose actions based on best 
 response, perturbed best response and log-linear learning rules. In \cite{4814554}, 
 Marden et al  formulated cooperative control problems as a repeated potential game. 
 They designed objective functions for players, such that the log-linear learning 
 rule converges to a pure Nash equilibrium in a probabilistic sense, where Nash 
 equilibrium action is played for a large fraction of time. 
 The idea of state based potential games was introduced in \cite{MARDEN20123075} 
 with the goal of designing local objectives to attain a desired global objective 
 through log-linear learning, where the introduction of state  helps in coordination. 
 In \cite{6459524}, Li and Marden considered designing local objectives for 
  state based potential games with continuous action sets with the objective of 
  minimizing a convex function of the joint action profile. 

 \par A completely uncoupled algorithm to reach efficient Nash equilibrium 
  was proposed by Pradelski et al in \cite{pradelski2012learning}. 
  The algorithm was based on the  theory of perturbed Markov chains 
  \cite{freidlin2012random},\cite{RePEc:ecm:emetrp:v:61:y:1993:i:1:p:57-84}.
  With similar ideas, Marden et al proposed algorithms to achieve maximum sum 
  payoff in  \cite{marden2014achieving}. These algorithms were adapted to wireless 
  networks in \cite{6576468},\cite{6573240}. In \cite{7040463}, Borowski et al 
  proposed distributed algorithm to achieve efficient correlated equilibrium. 
  In our prior work \cite{7746656}, we proposed a distributed algorithm for 
  utility maximization and used perturbed Markov chain ideas to prove optimality. 
  In \cite{7925713}, we extended \cite{7746656} to state based models.

\subsection{Contributions}
\begin{enumerate}
\item In this paper, we propose two distributed algorithms for utility 
maximization. To the best of our knowledge, 
we are the first to propose completely uncoupled utility maximization 
algorithms that achieves the entire rate region. These algorithms find application in 
distributed channel selection, user association and power control in a variety of 
wireless networks.  
 
\item In the first algorithm, which we call General Network Utility Maximization
 (G-NUM), we allow the utilities to be general functions
 (not necessarily concave) of the average payoff. We show that G-NUM is optimal
 and the sum payoff maximizing algorithm in \cite{pradelski2012learning} is 
  a special case of it. 
   
 \item For concave utilities, we propose our second algorithm,
  Concave Network Utility Maximization (C-NUM). In C-NUM, we present 
   an approximate subgradient algorithm inspired by Gibbs sampling based 
  Utility maximization algorithm in \cite{DBLP:journals/ton/JiangW10}.
  With C-NUM, we show an important connection between completely uncoupled  
  algorithms based on perturbed Markov chains and Gibbs sampling based utility 
  maximization algorithms such as~\cite{DBLP:journals/ton/JiangW10}.   

 \item We also derive upper bounds on the mixing time for the algorithm in 
 \cite{pradelski2012learning}  and show that the mixing time (upper bound) 
 grows exponentially in the number of nodes. 
\end{enumerate}

\subsection{Outline}
The rest of the paper is organized as follows. In Section \ref{section System Model}, 
we discuss the system model. In Section \ref{section: Alg1}, we propose our first algorithm on general 
utility maximization and discuss convergence results. 
We present the second algorithm for concave utilities in Section \ref{Section: Alg2}
with convergence results. We discuss numerical results in Section \ref{section 
Numerical examples}.  In Section \ref{section:Summary and Comparisons}, we 
provide a summary with comparisons.
The proofs of our results are discussed in detail in the Appendix
\ref{section Appendix}.

\section{System Model}
\label{section System Model}
We consider a system of $N$ nodes. We assume a slotted time model. In every time slot 
$t$, node $i$ chooses to play an action $a_i(t) \in \mathcal{A}_i$. We assume that, $\forall i$, $\mathcal{A}_i$ is finite. Let $a(t)=(a_1(t),a_2(t),\cdots,a_N(t)) \in \mathcal{A}$ denote the action profile at time $t$, where $\mathcal{A}=\prod_i \mathcal{A}_i$. In slot $t$,  node $i$ gets a reward $r_i(t)$. We assume that, 
\begin{align*}
r_i(t) = f_i(a(t)),
\end{align*}
where $f_i(\cdot)$ is a non-negative function from $\mathcal{A} \to \mathbb{R}^+$. We allow $f_i(\cdot)$ to be general, thereby allowing our setup  to be applied in a variety of models. 

Let $p(a)$ denote the fraction of time action profile $a \in \mathcal{A}$ is chosen, where $\sum_a p(a) = 1$. The average payoff received by node $i$ is given by,
\begin{align}
\label{average payoff}
\bar{r}_i(p) = \sum_a p(a) r_i(a).
\end{align}
Let $\bar{r}=\left(\bar{r}_1,\bar{r}_2,\cdots,\bar{r}_N\right)$ denote the vector of average payoffs obtained by the nodes. We say that a payoff vector $\bar{r}$ is achievable if there exists a $p>0$ satisfying \eqref{average payoff}. We denote by $\mathcal{R}$, the set of achievable payoffs.  Then,
\begin{align*}
\mathcal{R}= \left\{ \bar{r} = (\bar{r}_i(p)) \mid   \sum_a p(a) =1, \; p(a)>0 \right\}.
\end{align*} 
Let $U_i$ denote the utility of node $i$, where $U_i$ is a function of $\bar{r}_i$. Without loss of generality, we assume that $U_i(\bar{r}_i)$ is bounded between $0$ and $1$. The objective here is to maximize the sum utility $\sum_i U_i(\bar{r}_i)$ where $\bar{r} \in \mathcal{R}$. We formulate the utility maximization problem as,
\begin{align}
\label{utility maximization formulation}
\begin{aligned}
\max &\sum_i U_i(\bar{r}_i) \\
\text{s.t. } &\bar{r}_i \leq \sum_a p(a) r_i(a) \\ 
&\sum_a p(a) =1, \; p(a)\geq0.
\end{aligned}
\end{align} 
In this work, we seek a distributed algorithm that solves~\eqref{utility maximization formulation}. 
\textit{By distributed, we mean that our 
algorithm is completely uncoupled, where every node has knowledge only about its 
previous actions and payoffs.} A node chooses its action purely based 
on its previous actions and payoffs.   
 We assume that the network satisfies the following interdependence 
definition. 
\begin{assum}{Interdependence:}
\label{assumption Interdependence}
For any subset of the nodes $\mathcal{S}$ and any action profile $a = (a_S,a_{-S})$, there exists a node $j \notin S$ and an action profile $({a}^{\prime}_{\mathcal{S}},a_{\mathcal{-S}})$,  such that $f_j(a_S,a_{-S}) \neq f_j(a^{\prime}_S,a_{-S})$.
\end{assum}
\begin{remark}
\label{remark_csma}
\par The adaptive CSMA models in \cite{DBLP:journals/ton/JiangW10,6097082} 
assume a conflict graph based network model. 
Extensions of \cite{DBLP:journals/ton/JiangW10,6097082} to more practical
SINR based interference model was considered in \cite{6256765,6576230}. 
In the above works, the service rate of a link depends 
on the actions of other links only through the notion of feasible actions
(transmission modes) i.e. $r_i(a) = f_i(a_i) \; \text{s.t. } a_i \text{ is feasible}$. 
In contrast, we allow payoffs to be a function of the joint action profile, i.e.
$r_i(a)=f_i(a)$, where $a=(a_1,\cdots,a_N)$ is the joint action profile.
Such an assumption is preferred in applications such as user association and 
channel selection (See models in \cite{hou2013proportionally}
\cite{borst2014nonconcave}).  In our work, we assume that the 
network satisfies interdependence which enables us to work with a more general 
model.
\end{remark}

%

\section{General Network Utility Maximization algorithm}
\label{section: Alg1}
In this section, we present a completely uncoupled utility maximization algorithm 
for general utilities (possibly non concave) and discuss its convergence results. 
Algorithm \ref{Alg1}, which we call General Network Utility Maximization 
(G-NUM) algorithm is described below. 

\begin{algorithm}
\caption{\textbf{: General Network Utility Maximization Algorithm (G-NUM) }}
\label{Alg1}
\begin{algorithmic}
\Initialize{Fix $c > N$, $\epsilon > 0$ and $K \geq 1$. \\ For all $i$, set $q_i(0) =0$.}
\end{algorithmic}

\begin{algorithmic}
\State \textbf{Action update at time $t$:}
\If {($q_i(t-1)=1$)}
\State
$a_i(t) =
\begin{cases}
a_i(t-K) & \text{w.p. \ } 1 - \epsilon^c \\
a_i \in {\mathcal A}_i & \text{w.p. \ } \frac{ \epsilon^c}{\left|\mathcal{A}_i\right|-1} \text{\ if\ } a_i \neq a_i(t-K)
\end{cases}$
\Else
\State $a_{i}(t)= a_i$ w.p. $\frac{ 1}{\left|\mathcal{A}_i \right|}$ where $a_i \in {\mathcal A}_i$
\EndIf
\end{algorithmic}

\begin{algorithmic}
\State \textbf{Update for $q_i(\cdot)$ at time $t$:}
\If { ($q_i(t-1)=1$) and ($a_i(t)=a_i(t-K)$) \\ and $\left( \sum\limits_{j=t-K+1}^{t}  r_i(j) = \sum\limits_{j=t-K}^{t-1}  r_i(j)\right)$}
\State $q_i(t) = 1 \text{\ w.p. \ } 1$
\Else
\State
$q_i(t) =
\begin{cases}
1 & \text{w.p. \ } \epsilon^{1-U_ i\left(\frac{1}{K}\sum\limits_{j=t-K+1}^{t}  r_i(j)\right)} \\
0 & \text{w.p. \ } 1- \epsilon^{1-U_ i\left(\frac{1}{K}\sum\limits_{j=t-K+1}^{t}  r_i(j)\right)}
\end{cases}$
\EndIf
\end{algorithmic}
\end{algorithm}

The history (and possible state)
of any node $i$ at the end of slot $t-1$ is the sequence of actions 
$(a_i(1),\cdots,a_i(t-1))$
and the payoffs received $(r_i(1),\cdots,r_i(t-1))$.
We require that the nodes maintain an internal ``satisfaction'' variable, 
denoted by $q_i(t-1)$ (at time $t-1$), which is a function of the action and the payoff 
received in the previous $K$ slots (where $K$ is a fixed positive integer).
We let $q_i(\cdot)$ take values from the binary set $\{ 0,1 \}$, where $q_i(\cdot) = 1$
represents a state of ``content'' with the choice of actions and the payoff received (in 
the previous $K$ slots), while $q_i(\cdot) = 0$ represents a state of ``discontent'' for 
the node.  For every slot, the nodes have to choose an 
action $a_i(\cdot)$ and update their satisfaction variable $q_i(\cdot)$.

\par Node $i$ chooses action, $a_i(t)$, at the beginning of slot $t$ depending on its
  satisfaction variable $q_i(t-1)$. If node $i$ is content at the beginning of 
 slot $t$, i.e., $q_i(t-1) = 1$, then it repeats an earlier action, here $a_i(t-K)$,
with high probability $1 - \epsilon^c$ (where, $c$ is a parameter and $c > N$).
 With a small probability $\epsilon^c$, any other action is 
chosen uniformly at random. 
When node $i$ is discontent, i.e., $q_i(t-1) = 0$, the node selects an action 
randomly and uniformly from ${\mathcal A}_i$.

\par The satisfaction variable  $q_i(t)$ is updated by the end of slot $t$.
If the node $i$ was content in slot $t-1$ (i.e., $q_i(t-1) = 1$), then, it continues
 to remain content if it had repeated an earlier action (i.e., if $a_i(t) = a_i(t-K)$, which 
happens with high probability) and received the same payoff in the last $K$ 
slots, i.e., $\sum\limits_{j=t-K+1}^{t}  r_i(j) = \sum\limits_{j=t-K}^{t-1}  r_i(j)$ 
 (which would happen when the action profile in the network remains unchanged).
Otherwise, a node becomes content with a very low probability depending on 
the utility, where the utility is a function of the average of the payoffs received in the 
last $K$ slots.
\begin{figure}[!ht]
    \centering
    \begin{subfigure}[t]{0.475\textwidth}
   \centering
    \begin{tikzpicture}[scale=0.95,every node/.style={scale=0.95}]
\node[inner sep=0](content) at (0,0){\includegraphics[width=.04\textwidth]{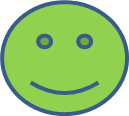}};
 \node[inner sep=0] (C) [below = 0.05cm of content,xshift=0.2cm]{\tiny$q_i(t-1)$};
 \node[inner sep=0](action1)[right = 0.2cm of  content]{{\scriptsize$ a_i(t) =
\begin{cases}
a_i(t-K) & \text{w.p. \ } 1 - \epsilon^c \\
a_i \neq a_i(t-K) & \text{w.p. \ } \frac{ \epsilon^c}{\left|\mathcal{A}_i\right|-1}  
\end{cases}$}};
\node[inner sep=0](discontent) [right = 0.3cm of  action1]{\includegraphics[width=.04\textwidth]{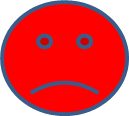}};
 \node[inner sep=0] (D) [below = 0.05cm of discontent,xshift=0.2cm]{\tiny$q_i(t-1)$};
\node[inner sep=0](action2) [right = 0.2cm of  discontent]{{\scriptsize$ a_i(t)=
a_i \text{w.p. \ } \frac{1}{|A_i|}$}};
\draw ($(content.north west)+ (-0.2,0.4)$) rectangle ($(action2.south east)+(0.05,-0.4)$);
\draw ($(action1.north east)+ (0.15,0.18)$) -- ($ (action1.south east)+(0.15,-0.18)$); 
\end{tikzpicture}
\caption{Action Update at time $t$: A content node chooses the action 
that was chosen $K$ slots before with a large probability as indicated  in the left box. 
Whereas, a discontent node chooses an action uniformly 
at random as shown in the right box.}
 \label{Gnum_actionupdate}
    \end{subfigure}  

    \begin{subfigure}[t]{0.475\textwidth}
    \centering
    \begin{tikzpicture}[scale=0.95,every node/.style={scale=0.95}]
      \node[inner sep=0](content) at (0,0){\includegraphics[width=.04\textwidth]{content.png}};
       \node[inner sep=0](content1) at (3,0){\includegraphics[width=.04\textwidth]{content.png}};
      \draw[->,thick] ($(content.east)+(0.1,0)$) to node[above] {{\scriptsize $a_i(t)=a_i(t-K)? $}} node[below]{ {\scriptsize$r_i(t)=r_i(t-K)?$}} ($(content1.west)+(-0.1,0)$); 
       \node[inner sep=0](else)[right= 0.1 cm of content1] {{\scriptsize Else}};
         \node[inner sep=0](content2) [right = 0.4 cm of else,yshift=0.5 cm]{\includegraphics[width=.04\textwidth]{content.png}};
         \node[inner sep=0](discontent) [right = 0.4 cm of else,yshift=-0.5 cm]{\includegraphics[width=.04\textwidth]{discontent.png}};
         \node[inner sep =0](C)[right=0.1cm of content2,yshift=0.2 cm]{{\scriptsize w.p. $\epsilon^{1-U_i\left(\frac{1}{K}\sum\limits_{j=t-K+1}^{t}  r_i(j)\right)}$}};
           \node[inner sep =0](D)[right=0.1cm of discontent,yshift=0.2 cm]{{\scriptsize w.p. $1-\epsilon^{1-U_i\left(\frac{1}{K}\sum\limits_{j=t-K+1}^{t}  r_i(j)\right)}$}};
           \draw[->,thick] ($(else.east)+(0.1,0)$) to ($(content2.west)+(-0.1,-0.1)$); 
           \draw[->,thick] ($(else.east)+(0.1,0)$) to ($(discontent.west)+(-0.1,0.1)$); 
           \draw ($(content.north west)+ (-0.1,1)$) rectangle ($(D.south east)+(0,-0.4)$);
           \draw ($(content1.north east)+ (0.05,1)$) -- ($ (content1.south east)+(0.05,-0.9)$); 
    \end{tikzpicture}
    \caption{Satisfaction Update at time $t$: A content node becomes content with 
    probability $1$, if it repeats the action that was chosen $K$ slots before and receives 
    the same payoff it received $K$ slots before, as shown in the left box. In any other  
    case, a node becomes content with probability shown in the right box. Content and 
    discontent states are shown as  a green happy and red sad smiley respectively.}
 \label{Gnum_satisfactionupdate}
    \end{subfigure}
    \caption{Gnum update: The Figure \ref{Gnum_actionupdate} shows 
    the update of actions and Figure \ref{Gnum_satisfactionupdate} shows the
    update of satisfaction variable at time $t$.}
    \label{Fig:Gnum}
\end{figure}

\par The satisfaction variable aids in synchronizing changes in actions across the network.
When all the nodes are content, i.e., $q_i(t) = 1$ for all $i$, all the 
nodes continue to repeat the last $K$ actions (in synchrony) and continue to receive 
a constant average payoff based on the sequence of actions. Now if a node decides to 
change its action, it becomes discontent with a large probability and chooses its 
action randomly in the subsequent slots. By interdependence this causes
other nodes in the network to become discontent and sets a ripple effect 
causing all nodes in the network to become discontent. 
Finally, the nodes become content again, where a sequence of action profiles
is preferred depending on the average payoff corresponding to the $K$-sequence and 
nodes' utilities. 
In the remainder of this section, we will prove that the above algorithm chooses an 
action sequence
that optimizes the formulation in (\ref{utility maximization formulation}) as 
$\epsilon \rightarrow 0$ and as $K \rightarrow \infty$.
\begin{remark}
In G-NUM, we restrict the rate region to those points achievable by a sequence of $K$
actions. Thereby, the problem of maximizing the utility  
is posed as a combinatorial optimization problem. 
A simple solution to this modified problem is to use a frame with 
a sequence $K$ of actions and apply the algorithm 
in \cite{marden2014achieving} over these frames. Such an approach has been 
used in \cite{7040463} to achieve efficient correlated equilibrium. However, this 
requires an additional requirement of frame synchronization. In G-NUM,  
 we do not use frames consisting of a sequence of actions, instead we let the action
 at time $t$ to depend on the previous $K$ actions, thereby avoiding the requirement 
 of frame synchronization.
\par A sequence of transmit power levels is used to stabilize a set of 
arrival rates in \cite{7145489}, wherein a node becomes content if the node achieves 
its arrival rate. In contrast, in G-NUM, we would like the nodes to maximize the 
utilities. This leads to a significant difference to the analysis of G-NUM as 
compared to the algorithm in \cite{7145489}. 
\end{remark}
\begin{remark}
With $K=1$, G-NUM replicates the algorithm in 
\cite{marden2014achieving}. In this sense, G-NUM
generalizes the ideas presented in works such as  \cite{pradelski2012learning} 
and \cite{marden2014achieving}. 
\end{remark}

\subsection{Performance Analysis of G-NUM}
In this section, we discuss the optimality of G-NUM. We characterize the performance 
of G-NUM as $t \to \infty$ and $\epsilon \to 0$. To analyze the performance of 
G-NUM, we use tools  from perturbed Markov chains \cite{freidlin2012random},
\cite{RePEc:ecm:emetrp:v:61:y:1993:i:1:p:57-84}. 
We first show that G-NUM induces a perturbed Markov chain 
(perturbed by $\epsilon$). In Theorem  \ref{thm optimality of algorithm 1}, 
we show that the stochastically stable states  
(See Definition \ref{defn stochastically stable states}) of the Markov chain 
 induced by G-NUM are the set of actions that maximize the sum utility of the nodes.  
Define $X_{\epsilon}(t)$ as,
\[ X_{\epsilon}(t)=\prod_{i=1}^N (a_i(t-K+1),..,a_i(t),q_i(t)). \]
$X_{\epsilon}(t)$ corresponds to the actions of all the nodes in the previous $K$ slots
and the ``satisfaction'' variable of the nodes in the current slot $t$.
In the following Lemma, we show that $X_{\epsilon}(t)$ is a regular perturbed 
Markov chain (perturbed by the algorithm parameter $\epsilon$) with a positive 
stationary distribution.

\begin{defn}{Regular perturbed Markov Chain:}
\label{defn Regular perturbed Markov Chain}
A Markov process $X_{\epsilon}(t)$, with state space $\Omega$ and transition 
probability $P^{\epsilon}$, is a regular perturbed Markov process (perturbed by $
\epsilon$) if the following conditions are satisfied 
(see \cite{RePEc:ecm:emetrp:v:61:y:1993:i:1:p:57-84}).
\begin{enumerate}
\item $\forall \epsilon > 0$, $X_{\epsilon}(t)$ is an ergodic Markov Process
\item $\begin{aligned}[t]
\forall x, y \in \Omega, \lim_{\epsilon \to 0} P^{\epsilon} ( x,y ) = P^{0}(x,y) 
\end{aligned}$
\item $ \begin{aligned}[t]
\forall x,y \in \Omega,\ \text{if\ } &P^{\epsilon} ( x , y) > 0 \text{ for some } \epsilon > 0, \text{then},
\end{aligned}$
\[ 0< \lim_{\epsilon \to 0} \frac{ P^{\epsilon} (x , y)}{\epsilon^{r(x , y)}} < \infty, \]
and $r(x , y)\geq 0$ is called the resistance of the transition $x \to y$.
\end{enumerate}
\end{defn}

 \begin{lemma}
 \label{regular perturbed Markov chain}
$X_{\epsilon}(t)$ is a regular perturbed Markov chain (perturbed by $\epsilon$)
over the state space $\Omega =(\mathcal{A}^K,\{0,1\}^N)$ with a positive 
stationary distribution $\pi_{\epsilon}$. 
\end{lemma}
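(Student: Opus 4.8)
\emph{Proof proposal.} The plan is to first confirm that $X_{\epsilon}(t)$ is a genuine finite-state Markov chain, and then to verify the three defining properties of Definition~\ref{defn Regular perturbed Markov Chain} in turn, deducing the positive stationary distribution from ergodicity. The state records, for each node, the window of its last $K$ actions together with its current satisfaction bit, so $\Omega = \mathcal{A}^K \times \{0,1\}^N$ is finite. To see the Markov property, note that the action rule for $a_i(t+1)$ uses only $q_i(t)$ and $a_i(t+1-K)$, both of which are components of $X_{\epsilon}(t)$; and, since $r_i(j) = f_i(a(j))$, the entire reward window $r_i(t-K+1),\ldots,r_i(t+1)$ is a deterministic function of the stored action windows and the freshly drawn actions. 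In particular the reward-matching test in the satisfaction update collapses to $r_i(t+1) = r_i(t-K+1)$, and the utility argument $\frac{1}{K}\sum_{j=t-K+2}^{t+1} r_i(j)$ is likewise determined. Hence $X_{\epsilon}(t+1)$ is a randomized function of $X_{\epsilon}(t)$ alone, and the nodes' draws are conditionally independent given $X_{\epsilon}(t)$, so $P^{\epsilon}(x,y)$ factorizes across nodes.

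Next I would read off conditions (2) and (3) simultaneously. Because the per-node elementary probabilities are built from the finite list $1-\epsilon^{c}$, $\epsilon^{c}/(|\mathcal{A}_i|-1)$, $1/|\mathcal{A}_i|$, $\epsilon^{1-U_i(\cdot)}$ and $1-\epsilon^{1-U_i(\cdot)}$, every entry $P^{\epsilon}(x,y)$ is a finite sum of monomials $c_k\,\epsilon^{p_k}$ whose exponents $p_k$ are non-negative, being sums of terms of the form $c$ and $1-U_i(\cdot)\in[0,1]$. Setting $r(x,y)$ to be the smallest exponent that appears with a nonzero coefficient gives $r(x,y)\geq 0$ and makes $\lim_{\epsilon\to0} P^{\epsilon}(x,y)/\epsilon^{r(x,y)}$ equal to that leading coefficient, which is finite and strictly positive; this is condition (3). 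Keeping instead the constant term shows that $\lim_{\epsilon\to0}P^{\epsilon}(x,y)=P^{0}(x,y)$ exists (positive when $r(x,y)=0$, zero otherwise), which is condition (2).

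The substantive step is ergodicity for each fixed $\epsilon\in(0,1)$, i.e.\ irreducibility and aperiodicity of the finite chain, and I would route all communication through the all-discontent configurations. From an arbitrary state, let every content node deviate from $a_i(t+1-K)$ (each with probability $\epsilon^{c}>0$); each such node then falls into the ``Else'' branch of the satisfaction update and turns discontent with probability $1-\epsilon^{1-U_i(\cdot)}$, while already-discontent nodes remain discontent, so in a single step the chain reaches an all-discontent state with positive probability. From an all-discontent state the nodes draw actions uniformly and independently, so over $K$ successive discontent steps any prescribed action window can be installed, after which one final satisfaction draw realizes any target bit vector $q\in\{0,1\}^{N}$. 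Concatenating these positive-probability legs connects any ordered pair of states, giving irreducibility; a constant-window all-content state (where each node repeats its action with probability $1-\epsilon^{c}$ and, the reward window being constant, stays content with probability $1$) furnishes a positive-probability self-loop and hence aperiodicity. Finiteness with irreducibility and aperiodicity yields ergodicity and a unique $\pi_{\epsilon}$ that is strictly positive on $\Omega$, completing the claim.

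I expect the main obstacle to be irreducibility, specifically verifying that the required per-node probabilities are genuinely positive throughout. The delicate boundary case is $U_i(\cdot)=1$, where $1-\epsilon^{1-U_i(\cdot)}=0$ forbids ``staying discontent''; I would circumvent it by choosing the intermediate action windows during the discontent phase so that the averaged payoffs keep $U_i(\cdot)<1$, which is feasible because Assumption~\ref{assumption Interdependence} guarantees enough variability in each node's reward across action profiles. Everything else reduces to routine bookkeeping of the $\epsilon$-exponents already isolated in the second step.
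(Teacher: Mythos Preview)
The paper does not actually prove this lemma here; it simply cites Lemma~1 of the earlier conference version \cite{7746656}. Your proposal therefore cannot be compared line by line with an in-paper argument, but it follows exactly the standard template one expects for such a statement: verify the Markov property by checking that the action rule and the satisfaction rule at time $t+1$ depend only on the stored window $(a_i(t-K+1),\ldots,a_i(t),q_i(t))$ and the freshly drawn actions; read off conditions~(2) and~(3) from the fact that each entry of $P^{\epsilon}$ is a finite product of the elementary probabilities you list, hence has a well-defined leading power of $\epsilon$ with positive coefficient; and obtain ergodicity by routing through discontent configurations and exhibiting a self-loop at an all-content constant-window state. This is the same skeleton used in \cite{marden2014achieving,7746656}, so your approach is in line with the intended one.

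The only point that needs more care than you give it is precisely the one you flag: when $U_i(\cdot)=1$ the ``stay discontent'' probability $1-\epsilon^{1-U_i(\cdot)}$ vanishes, so a one-step passage to the all-discontent state can fail. Your proposed remedy---steering the intermediate action windows so that every node's averaged payoff avoids the maximizer of $U_i$---does work, but interdependence by itself only guarantees that \emph{some} node's payoff moves under a unilateral change, not that all nodes' utilities can be simultaneously kept below $1$; turning this into a rigorous path requires iterating the interdependence assumption across nodes (much as in the resistance calculations of Lemma~\ref{lemma: states with r_min}) rather than invoking it once. A cleaner alternative, which sidesteps the boundary case entirely, is to route through an all-\emph{content} state instead: from any state every node becomes content with probability $\epsilon^{1-U_i(\cdot)}>0$, and from an all-content state one then uses a single $\epsilon^{c}$-deviation plus interdependence to propagate discontent. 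Either way the fix is routine once identified, and the rest of your argument is sound.
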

\begin{proof}
See Lemma $1$ in \cite{7746656}.
\end{proof}
The stationary distribution of the Markov chain $X_{\epsilon}(t)$ characterizes
 the long term average payoffs of the nodes.
We seek to characterize the stationary distribution of the Markov chain 
$X_{\epsilon}(t)$ for small $\epsilon > 0$.
The following definition helps identify states (the action sequences 
and average payoffs) that 
occur a significant fraction of time, especially, for small $\epsilon$.
 \begin{defn}
 \label{defn stochastically stable states}
 {\cite{RePEc:ecm:emetrp:v:61:y:1993:i:1:p:57-84} Stochastically stable states:}\\
 A state $x \in \Omega$ of a perturbed Markov chain $X_{\epsilon}(t)$ is said to be stochastically stable, if $\lim_{\epsilon \to 0}\pi_{\epsilon}(x)>0$.
\end{defn}

The following theorem characterizes the stochastically stable states of the Markov chain
$X_{\epsilon}(t)$.
\begin{theorem}
\label{thm optimality of algorithm 1}
Under Assumption \ref{assumption Interdependence}, the stochastically stable states of the 
Markov chain $X_{\epsilon}(t)$ are the set of states
that optimize the following formulation:
\begin{align}
\left.\begin{aligned}
\label{formulation2}
\max \;  & \sum_{i =1}^N   U_ i(\bar{r}_i) \\
\text{where,\ \ } \bar{r}_i &= \sum_{a \in \mathcal{A}} p(a) f_i(a) \\
\text{s.t.\ \ \ }  a &=\left(a_1,a_2...a_N\right) \in   \mathcal{A} \\
  p(a) &\in \left\{0, \frac{1}{K}, \frac{2}{K},\cdots, 1\right\}, \sum_{a} p(a) \leq 1
\end{aligned}
\right\}
\end{align}
\begin{proof}(Theorem~\ref{thm optimality of algorithm 1})
See Appendix \ref{Proof of Optimality of alg1}
\end{proof}
\end{theorem}

\begin{remark}
\par A key assumption for G-NUM to work is interdependence. We 
study a completely uncoupled setup where the only feedback to a node on the action 
profile is its payoff. Interdependence ensures that changes in actions by any 
node(s) can be perceived by other nodes in the network as a change in payoff. G-NUM 
exploits this feature where, a discontent node (a node that perceived a change in the 
action profile via a payoff change) changes its action, causing discontent to other 
users in the network. The importance of this assumption is discussed in 
detail in \cite{marden2014achieving}. 
\par In Theorem~\ref{thm optimality of algorithm 1}, we prove that GNUM 
optimizes the formulation in (\ref{utility maximization formulation}) where,
$p(a)$ is restricted to the set $\left\{ 0, \frac{1}{K}, \cdots, \frac{K}{K} \right\}$. For 
large $K$ and bounded utility functions, we note that the performance of our 
proposed algorithm would be approximately optimal (even for small enough 
$\epsilon > 0$).
\end{remark}

\begin{remark}
\label{remark MCMC and G-NUM}
A well known algorithm for combinatorial optimization is a sampling based 
algorithm called simulated annealing introduced by Kirkpatrick et al in
\cite{Kirkpatrick83optimizationby}. Simulated annealing is 
inspired from statistical physics, where samples from the feasible set are obtained 
from a distribution. The distribution has a parameter $T$, called 
the temperature and as $T\to \infty$ the distribution converges to the optimal set. 
\par Markov Chain Monte Carlo (MCMC) based sampling methods have been used to 
achieve such a distribution in a distributed way \cite{4215753}, \cite{6636117}. 
In MCMC based sampling, a reversible Markov chain is constructed to achieve the
distribution in \cite{Kirkpatrick83optimizationby} in a distributed way. 
For constructing such MCMC algorithms in a distributed way, the network is
required to have a Markov random field structure.
In \cite{marden2014achieving} and in G-NUM, samples are obtained from
 a non-reversible Markov chain $X_{\epsilon}$, whose stationary distribution for 
a fixed $\epsilon$ is difficult to characterize.
 However as $\epsilon \to 0$, the stationary distribution converges to the 
 optimal points as seen in Theorem \ref{thm optimality of algorithm 1}. In this 
 approach, we do not require the network to have a Markov random field structure. 
\end{remark}

\section{Distributed subgradient Algorithm for Concave Utility Maximization}
\label{Section: Alg2}
In Remark \ref{remark MCMC and G-NUM}, we stated an important relation
between G-NUM and MCMC based algorithms. MCMC based utility maximization 
algorithms were presented in 
\cite{DBLP:journals/ton/JiangW10,liu2010towards,5625654,6097082}, 
where the parameters of the MCMC algorithm are adapted to achieve utility 
maximization for concave utilities. Inspired by these works,
we propose a completely uncoupled sub-gradient algorithm for concave utility 
maximization.  
\par Throughout this section, we assume that, 
for all $i$, $U_i(\cdot)$ is increasing and strictly concave with $U^{\prime}_i(0)<V$.
 We present our algorithm below, which we call Concave Network Utility 
 Maximization algorithm (C-NUM). As before, every node $i$ is given an internal satisfaction 
 variable $q_i(t)$ taking values $0$ and $1$. The satisfaction variable 
 $q_i(\cdot)$ serves a similar purpose here, where, $q_i(\cdot)=1$ corresponds to 
 node $i$ being ``content'' with the current action chosen and $q_i(\cdot)=0$ 
 represents ``discontent'' state.  The key difference here as compared to G-NUM 
 is that each node is given a weight  $\lambda_i(t)$ taking values in $[0, 
 \lambda_{\max}]$. We divide time into frames of $T$ slots each. The first 
 frame contains slots $1$ to $T$, second frame contains slots $T+1$ to $2T$ and so on. 
 The weight $\lambda_i(t)$ is updated at the end of every frame and is constant 
 during the frame, i.e. $\lambda_i(t)= \lambda_i(l), \;\; t \in~[(l-1)T+1,lT] $. At the 
 end of every frame, node $i$  calculates its  weight $\lambda_i(l)$.  
  However, a node updates its action $a_i(t)$ and 
 satisfaction variable $q_i(t)$ in every time slot.       
     
\par The update of $a_i(t),q_i(t)$ at time $t$ requires only the knowledge of the 
immediate history, i.e. $a_i(t-1),q_i(t-1)$ and $\lambda_i(t)$. (This in contrast to 
G-NUM, where the update at any time required the knowledge of 
previous $K$ actions).  At the beginning of a slot $t$, nodes choose their action 
depending on the satisfaction variable $q_i(t-1)$. 
If node $i$ was content in time slot $t-1$ i.e. $q_i(t-1)=1$, then it repeats the action it 
chose in the previous slot i.e. $a_i(t-1)$ with a large probability $1-\epsilon^c$. Any 
other action is chosen uniformly at random with a probability $\epsilon^c$. In case  
node $i$ was discontent, i.e. $q_i(t-1)=0$,  it chooses an action uniformly at 
random from $\mathcal{A}_i$. Depending on the action profile $a(t)$ chosen, 
node~$i$ receives a payoff $r_i(t)=f_i(a(t))$. 
Node~$i$ updates its satisfaction variable $q_i(t)$  
based on the payoff it received during slot $t$ and its weight $\lambda_i(t)$. In slot $t
$, node~$i$ remains content with probability $1$, if it were content in the previous 
slot, repeats its previous action and the payoff remains unchanged, i.e. $q_i(t-1)=1, $ 
$a_i(t)=a_i(t-1)$ and $r_i(t)=r_i(t-1)$. Else, a node becomes content with a small probability $
\epsilon^{1-\frac{\lambda_i(t) r_i(t)}{\lambda_{\max}}}$ and remains discontent 
with probability $1-\epsilon^{1-\frac{\lambda_i(t) r_i(t)}{\lambda_{\max}}}$. Let 
$s_i(l)$ be the average payoff received by node~$i$ in frame~$l$, i.e. 
\begin{align*}
s_i(l)=\frac{1}{T}\sum_{t=(l-1)T+1}^{lT} r_i(t).
\end{align*}
\par Finally, at the end of frame $l$, $\lambda_i(l+1)$ is updated by a sub gradient algorithm as,
\begin{align}
\bar{r}_i(l) &= \arg\max_{\alpha_i \in [0,1]} U_i(\alpha_i)-\alpha_i \lambda_i(l), \nonumber \\
\label{subgradient update}
\lambda_i(l+1)&= \left[\lambda_i(l)+b(l)\left(\bar{r}_i(l)-s_i(l)\right)\right]^+.
\end{align}  
Here, $\lambda(l+1)$ will serve as the weight for the nodes during frame $l+1$, i.e. from slot $(lT+1)$ to $(l+1)T$
\begin{figure}[!ht]
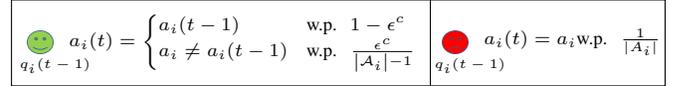
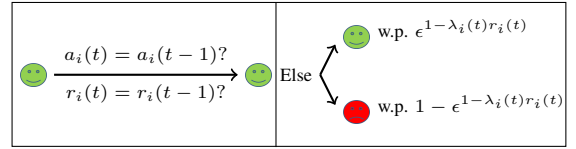

    \centering
    \begin{subfigure}[t]{0.475\textwidth}
   \centering
    \begin{tikzpicture}
\node[inner sep=0](content) at (0,0){\includegraphics[width=.04\textwidth]{content.png}};
 \node[inner sep=0] (C) [below = 0.05cm of content,xshift=0.2cm]{\tiny$q_i(t-1)$};
 \node[inner sep=0](action1)[right = 0.2cm of  content]{{\scriptsize$ a_i(t) =
\begin{cases}
a_i(t-1) & \text{w.p. \ } 1 - \epsilon^c \\
a_i \neq a_i(t-1) & \text{w.p. \ } \frac{ \epsilon^c}{\left|\mathcal{A}_i\right|-1}  
\end{cases}$}};
\node[inner sep=0](discontent) [right = 0.3cm of  action1]{\includegraphics[width=.04\textwidth]{discontent.png}};
 \node[inner sep=0] (D) [below = 0.05cm of discontent,xshift=0.2cm]{\tiny$q_i(t-1)$};
\node[inner sep=0](action2) [right = 0.2cm of  discontent]{{\scriptsize$ a_i(t)=
a_i \text{w.p. \ } \frac{1}{|A_i|}$}};
\draw ($(content.north west)+ (-0.2,0.4)$) rectangle ($(action2.south east)+(0.05,-0.4)$);
\draw ($(action1.north east)+ (0.15,0.18)$) -- ($ (action1.south east)+(0.15,-0.18)$); 
\end{tikzpicture}
\caption{Action Update at time $t$: A content node chooses the action 
that was chosen in the previous slot with a large probability as indicated  
in the left box. 
Whereas, a discontent node chooses an action uniformly 
at random as shown in the right box.}
 \label{Cnum_actionupdate}
    \end{subfigure}

    \begin{subfigure}[t]{0.475\textwidth}
    \centering
    \begin{tikzpicture}
      \node[inner sep=0](content) at (0,0){\includegraphics[width=.04\textwidth]{content.png}};
       \node[inner sep=0](content1) at (3,0){\includegraphics[width=.04\textwidth]{content.png}};
      \draw[->,thick] ($(content.east)+(0.1,0)$) to node[above] {{\scriptsize $a_i(t)=a_i(t-1)? $}} node[below]{ {\scriptsize$r_i(t)=r_i(t-1)?$}} ($(content1.west)+(-0.1,0)$); 
       \node[inner sep=0](else)[right= 0.1 cm of content1] {{\scriptsize Else}};
         \node[inner sep=0](content2) [right = 0.4 cm of else,yshift=0.5 cm]{\includegraphics[width=.04\textwidth]{content.png}};
         \node[inner sep=0](discontent) [right = 0.4 cm of else,yshift=-0.5 cm]{\includegraphics[width=.04\textwidth]{discontent.png}};
         \node[inner sep =0](C)[right=0.1cm of content2,yshift=0.1 cm]{{\scriptsize w.p. $\epsilon^{1-\lambda_i(t)  r_i(t)}$}};
           \node[inner sep =0](D)[right=0.1cm of discontent,yshift=0.1 cm]{{\scriptsize w.p. $1-\epsilon^{1- \lambda_i(t)  r_i(t)}$}};
           \draw[->,thick] ($(else.east)+(0.1,0)$) to ($(content2.west)+(-0.1,-0.1)$); 
           \draw[->,thick] ($(else.east)+(0.1,0)$) to ($(discontent.west)+(-0.1,0.1)$); 
           \draw ($(content.north west)+ (-0.1,0.8)$) rectangle ($(D.south east)+(0,-0.4)$);
           \draw ($(content1.north east)+ (0.05,0.8)$) -- ($ (content1.south east)+(0.05,-0.8)$); 
    \end{tikzpicture}
    \caption{Satisfaction Update at time $t$: A content node becomes content with 
    probability $1$, if it repeats the action that was chosen in the previous slot
     and receives the same payoff that it received in the previous slot, as shown 
     in the left box. In any other case, a node becomes content with 
     probability shown in the right box. Content and discontent states are shown
     as  a green happy and red sad smiley respectively.}
 \label{Cnum_satisfactionupdate}
    \end{subfigure}
    \caption{Cnum update: The Figure \ref{Cnum_actionupdate} shows 
    the update of actions and Figure \ref{Cnum_satisfactionupdate} shows the
    update of satisfaction variable at time $t$.}
    \label{Fig:Cnum}
\end{figure}


\begin{algorithm}
\caption{\textbf{: Concave Network Utility Maximization Algorithm (C-NUM)}}
\label{Alg2}
\begin{algorithmic}
\Initialize{Fix $c > N$, $\epsilon > 0$ and  \\ For all $i$, set $q_i(0) =0, \lambda_i(1)=\lambda_0$.}
\end{algorithmic}

\begin{algorithmic}
\State \textbf{Action update at time $t$:}
\If {($q_i(t-1)=1$)}
\State
$a_i(t) =
\begin{cases}
a_i(t-1) & \text{w.p. \ } 1 - \epsilon^c \\
a_i \in {\mathcal A}_i & \text{w.p. \ } \frac{ \epsilon^c}{\left|\mathcal{A}_i\right|-1} \text{\ if\ } a_i \neq a_i(t-1)
\end{cases}$
\Else
\State $a_{i}(t)= a_i$ w.p. $\frac{ 1}{\left|\mathcal{A}_i \right|}$ where $a_i \in {\mathcal A}_i$
\EndIf
\end{algorithmic}

\begin{algorithmic}
\State \textbf{Update for $q_i(\cdot)$ at time $t$:}
\If { ($q_i(t-1)=1$) and ($a_i(t)=a_i(t-1)$) \\ and $ r_i(t) =  r_i(t-1)$}
\State $q_i(t) = 1 \text{\ w.p. \ } 1$
\Else
\State
$q_i(t) =
\begin{cases}
1 & \text{w.p. \ } \epsilon^{1-\frac{\lambda_i(t) r_i(t)}{\lambda_{\max}}} \\
0 & \text{w.p. \ } 1- \epsilon^{1-\frac{\lambda_i(t) r_i(t)}{\lambda_{\max}}}
\end{cases}$
\EndIf
\end{algorithmic}
\begin{algorithmic}
\State \textbf{Update for $\lambda_i(\cdot)$ at the end of frame $j$:} 
\State $\bar{r}_i(j) = \arg\max_{\alpha_i \in [0,1]} U_i(\alpha_i)-\lambda_i(j)\alpha_i$ 
\State $\lambda_i(j+1)= \left[\lambda_i(j)+b(j)\left(\bar{r}_i(j)-s_i(j)\right)\right]^+$
\end{algorithmic}
\end{algorithm}


\begin{assum}
\label{bounded 1st derivative}
Assume that $U_i^{\prime}(0)<V$ 
\end{assum}
This assumption ensures that the weights $\lambda_i$ are bounded, which is given by the lemma below. A similar assumption is seen in \cite{5625654} (Lemma 19).
\begin{lemma}
\label{lemma bounded dual parameter}
If $\lambda_i(0)< V+1 \; \; \forall i$, then under Assumption \ref{bounded 1st derivative} we have,
 \begin{align*}
\lambda_i(l)\leq V+1 \; \; \forall l
\end{align*}
\begin{proof}
We discuss the proof in Appendix \ref{appendix Proof of lemma bounded dual parameter}.
\end{proof}
\end{lemma}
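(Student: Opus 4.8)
The plan is to prove the claim by induction on the frame index $l$, with the base case being immediate from the hypothesis on the initial weight ($\lambda_i(0) < V+1$, equivalently $\lambda_i(1) = \lambda_0 < V+1$ as set in the initialization of C-NUM). The heart of the argument is a monotonicity property of the maximizer $\bar{r}_i(l)$ that forces the dual variable to stop growing once it becomes large; the remaining step-size bookkeeping is then routine.

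First I would establish the key structural fact: whenever $\lambda_i(l) \geq V$, the maximizer $\bar{r}_i(l) = \arg\max_{\alpha_i \in [0,1]} U_i(\alpha_i) - \alpha_i \lambda_i(l)$ equals $0$. Since $U_i$ is strictly concave, $U_i'$ is strictly decreasing, so for every $\alpha_i \in [0,1]$ the derivative of the objective satisfies $U_i'(\alpha_i) - \lambda_i(l) \leq U_i'(0) - \lambda_i(l) < V - \lambda_i(l) \leq 0$, where the strict inequality uses Assumption \ref{bounded 1st derivative}. Hence the objective is strictly decreasing on $[0,1]$ and is maximized at the left endpoint, giving $\bar{r}_i(l) = 0$. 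This is the self-regulating effect that caps the growth of $\lambda_i$.

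I would also record two elementary facts used throughout: $\bar{r}_i(l) \in [0,1]$ (the feasible set of the maximization is $[0,1]$), and $s_i(l) \geq 0$ (since $r_i(t) = f_i(a(t)) \geq 0$, the frame average $s_i(l) = \frac{1}{T}\sum_{t} r_i(t)$ is non-negative), together with the step-size bound $b(l) \leq 1$. The inductive step then splits under the hypothesis $\lambda_i(l) \leq V+1$. If $\lambda_i(l) \geq V$, the structural fact gives $\bar{r}_i(l) = 0$, so $\lambda_i(l+1) = [\lambda_i(l) - b(l) s_i(l)]^+ \leq \lambda_i(l) \leq V+1$, because subtracting the non-negative quantity $b(l) s_i(l)$ from a non-negative value and projecting onto $[0,\infty)$ cannot increase it. If instead $\lambda_i(l) < V$, then discarding the non-positive term $-b(l) s_i(l)$ and using $\bar{r}_i(l) \leq 1$ and $b(l) \leq 1$ yields $\lambda_i(l+1) \leq \lambda_i(l) + b(l)\bar{r}_i(l) \leq \lambda_i(l) + b(l) < V+1$. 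In both cases $\lambda_i(l+1) \leq V+1$, which closes the induction.

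The only non-routine step is the structural fact in the second paragraph: recognizing that Assumption \ref{bounded 1st derivative} together with strict concavity makes the inner maximizer collapse to the boundary value $0$ precisely when the dual variable exceeds $V$, which is exactly what prevents unbounded growth. Once this threshold behavior of $\bar{r}_i(l)$ is in hand, everything else is bookkeeping with the projection $[\,\cdot\,]^+$ and the increment bound.
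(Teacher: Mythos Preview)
Your proof is correct and follows essentially the same induction-with-two-cases approach as the paper: in both, the key observation is that once $\lambda_i(l)$ exceeds $V$, strict concavity together with Assumption~\ref{bounded 1st derivative} forces $\bar{r}_i(l)=0$, so the dual variable cannot increase, while for $\lambda_i(l)\leq V$ the one-step increment is bounded by $b(l)\bar{r}_i(l)\leq 1$. Your version is slightly more explicit in recording the auxiliary facts $s_i(l)\geq 0$ and $b(l)\leq 1$ that the paper uses tacitly.
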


\subsection{Performance Analysis of C-NUM}
In this subsection, we shall discuss the performance of C-NUM.
We first motivate the basis of C-NUM by formulating the dual problem and showing 
that \eqref{subgradient update} is the approximate sub-gradient update of the 
dual problem. In Lemma \ref{lemma mixing time}, we derive upper bounds for the 
mixing time of the Markov chain induced by C-NUM for a constant $\lambda$. This 
aids in choosing an appropriate frame size. In Theorem \ref{optimality of Alg2}, we 
show that C-NUM is optimal.
\par Consider the optimization problem,
\begin{align}
\begin{aligned}
\label{Formulation alg2}
\max &\sum_i U_i(\bar{r}_i) \\
\text{s.t. } &\bar{r}_i \leq \sum_a p(a) r_i(a) \\ 
&\sum_a p(a) =1, \; p(a)\geq0.
\end{aligned}
\end{align}

The partial Lagrangian of \eqref{Formulation alg2} is given by, 
\begin{align*}
L(\bar{r},p,\lambda)&=\sum_i U_i(\bar{r}_i)  - \sum_i \lambda_i (\bar{r}_i- \sum_a p(a) r_i(a)), \\
&= \sum_i (U_i(\bar{r}_i)- \lambda_i \bar{r}_i) + \sum_a p(a) \sum_i \lambda_i r_i(a).
\end{align*}

The dual of the problem \eqref{Formulation alg2} is 
\begin{align}
\begin{aligned}
\label{dual}
d(\lambda)= \max_{\bar{r},p} L(\bar{r},p,\lambda) \\
\text{s.t. } \sum_a p(a)=1, \;\; p(a)\geq 0
\end{aligned}
\end{align}

The maximization in \eqref{dual} could be split into:
\begin{enumerate}
\item Maximization over $\bar{r}$,
\begin{align} 
\label{flow control problem}
\bar{r} =\max_{\alpha_i\in[0,1]} \sum_i U_i(\alpha_i)- \lambda_i \alpha_i. 
\end{align}

\item Maximization over $p$,
\begin{align}
\begin{aligned}
\label{Max weight problem}
p = \arg \max_{\mu} \sum_a \mu(a) \sum_i \lambda_i r_i(a) \\
\text{s.t. } \sum_a \mu(a)=1, \; \; \mu(a)\geq 0.
\end{aligned}
\end{align}
\end{enumerate}
 Here, \eqref{Max weight problem} is interpreted as the Max-weight 
problem and $\lambda_i$ is interpreted as virtual queue at node $i$.
%
%
By strong duality, the problem in \eqref{Formulation alg2} is equal to the following 
 \begin{align}
 \label{dual min}
 \begin{aligned}
 &\min_{\lambda} d(\lambda) \\
 &\text{s.t. } \lambda_i \geq 0.
 \end{aligned}
 \end{align} 

The sub-gradient of the above dual problem is given by,
\begin{align*}
\sum_a p(a)r_i(a)-\bar{r}_i ,
\end{align*}
where,  $p,\bar{r}$ are the primal optimal solutions.
 
 \begin{remark}The central idea is to use the algorithm in \cite{marden2014achieving}
(which is the same as G-NUM with $K=1$)  to solve the above
 maximization over $p$. Recall that G-NUM induces a Markov chain 
 $X_{\epsilon(t)}$ and Theorem \ref{thm optimality of algorithm 1} 
 characterizes the stationary distribution of $X_{\epsilon(t)}$ as $\epsilon \to 0$. 
However, we need to take care of the following,
\begin{enumerate}
\item the time taken for $X_{\epsilon(t)}$ to converge to its stationary distribution 
\item the effect of using a finite $\epsilon$ in the sub-gradient algorithm
 \end{enumerate}
\end{remark}

  In C-NUM, we use sub-gradient method to solve the dual problem 
 in \eqref{dual min}, where the dual parameters are updated at the 
 end of each frame. For instance, at the end of frame $l$, the approximate 
 subgradient at node $i$ is given by,
\begin{align}
\label{approximate gradient}
s_i(l)-\bar{r}_i(l),
\end{align}
where $s_i(l)$ is the service rate obtained in frame $l$ and $\bar{r}_i(l)$ solves 
\eqref{flow control problem}.
In each frame, where $\lambda$ is kept constant, the algorithm induces an ergodic 
Markov chain, denoted by $X_{\epsilon}$. 
This Markov chain maximizes \eqref{Max weight problem} in the limit 
$\epsilon \to 0$. 
 By using \eqref{approximate gradient} in place of the exact gradient, we need to 
 choose a frame size large enough to ensure that the Markov chain $X_{\epsilon}$ is 
 close to stationarity and $\epsilon$ small enough that \eqref{approximate 
 gradient} is close to the exact gradient.
  
\begin{remark} In \cite{5625654}, Gibbs sampling
is used to solve~\eqref{Max weight problem}, where the Markov random 
field nature of the setup allows a distributed implementation. 
In our setup, usage of such reversible Markov Chain Monte Carlo 
techniques to solve~\eqref{Max weight problem} does not lead to a distributed 
implementation.
  Another key difference here as compared to \cite{5625654}, is the absence of 
variational characterization (See Lemma $21$ in \cite{5625654}). Such 
characterization is possible in \cite{5625654}, due to the structure of 
the stationary distribution of the Gibbs sampling algorithm.
 In C-NUM, we do not have such nice structure in the stationary distribution. In fact,
 we only characterize the stationary distribution as  $\epsilon \to 0$. 
 This leads to an approximate subgradient algorithm even 
 if the Markov chain is assumed to converge instantaneously. 
\end{remark}

In the following discussion, we derive an upper bound for the Mixing time of 
$X_{\epsilon}$ for a fixed $\lambda$. As we shall see, 
this bound provides a trade off between $\epsilon$ and frame size $T$. 
\par Let $\pi_{\epsilon}$ be the stationary distribution of a Markov chain 
$X_{\epsilon}$ and $\pi_t$ be the distribution at time $t$ with $x$ as the initial state, 
i.e. $\pi_0(x)=1$. Define Mixing time of $X$ as,
\begin{align*}
\tau(\zeta)=\min\{t:d_v(\pi_t,\pi_{\epsilon})<\zeta\} \;\; \forall x,
\end{align*}
where,  $d_v(\pi_t,\pi_{\epsilon})=\frac{1}{2} \sum_y |\pi_t(y)-\pi_{\epsilon}(y)|$ is the total variation distance between $\pi_t$ and $\pi_{\epsilon}$.
Now, we have the following lemma on the mixing time of $X_{\epsilon}$ for any $\lambda$.
 \begin{lemma}
\label{lemma mixing time}
\begin{enumerate}
 \item  For C-NUM with any fixed $\lambda$, the mixing time of the Markov chain $X_{\epsilon}(t)$ has the following upper bound,
  \begin{align*}
  \tau(\zeta)<\frac{log(\frac{1}{\zeta})}{\epsilon^{(c+1)N}},
  \end{align*}
\item For G-NUM, the mixing time is upped bounded as, 
\begin{align*}
\tau(\zeta) \leq \left \lceil \frac{\log(\frac{1}{\zeta})}{K \epsilon^{(c+1)NK}} \right \rceil K 
\end{align*}
\end{enumerate}
  \begin{proof}
  See Appendix \ref{Mixing time of Resource allocation Markov chain}
  \end{proof}
\end{lemma}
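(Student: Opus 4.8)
The plan is to bound the mixing time through a single uniform minorization (Doeblin) condition: exhibit one target state $z$ that is reachable from \emph{every} state $x$ with probability at least some $\delta>0$, and then invoke the standard total-variation contraction $d_v(\mu P^{\epsilon},\nu P^{\epsilon})\le (1-\delta)\,d_v(\mu,\nu)$ to conclude $d_v(\pi_t,\pi_{\epsilon})\le(1-\delta)^t$ for every initial distribution. Solving $(1-\delta)^t\le e^{-\delta t}<\zeta$ then gives $t>\log(1/\zeta)/\delta$, which is exactly the form of both claimed bounds once the right $\delta$ is identified.

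For C-NUM with a fixed $\lambda$ (part $1$), I would fix an arbitrary profile $a^{\star}=(a_1^{\star},\dots,a_N^{\star})$ and take $z=(a^{\star},\mathbf{1})$, the state in which every node is content and plays $a_i^{\star}$. I lower bound $P^{\epsilon}(x,z)$ uniformly in $x$ by forcing, for each node $i$, the route ``switch the action to $a_i^{\star}$, then turn content.'' Since the action update is independent across nodes and, conditioned on the realized joint action $a^{\star}$, the satisfaction coins are conditionally independent, this probability factorizes over $i$. The worst case over $x$ is $q_i(t-1)=1$ with $a_i(t-1)\neq a_i^{\star}$: the switch has probability $\epsilon^{c}/(|\mathcal{A}_i|-1)$, and because it is a genuine change of action it places node $i$ in the ``Else'' branch of the satisfaction update, where it turns content with probability $\epsilon^{\,1-\lambda_i r_i/\lambda_{\max}}\ge\epsilon$ (the exponent lies in $[0,1]$ by Lemma \ref{lemma bounded dual parameter} and $r_i\le 1$). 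Multiplying over the $N$ nodes yields $\delta$ of order $\epsilon^{(c+1)N}$, and the contraction gives $\tau(\zeta)<\log(1/\zeta)/\epsilon^{(c+1)N}$.

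For G-NUM (part $2$) the state carries the last $K$ actions, so the natural object is the $K$-step chain $X_{\epsilon}(mK)$. I would show that from any state the $K$-window can be flushed and refilled with a fixed profile $a^{\star}$, with all nodes ending content: over $K$ consecutive slots each node plays $a_i^{\star}$ in every slot and settles into content, contributing a factor of order $\epsilon^{(c+1)N}$ per slot, hence $\delta_K$ of order $\epsilon^{(c+1)NK}$ per block. Applying the same contraction to the $K$-step chain bounds its mixing in super-steps by $\log(1/\zeta)/\delta_K$; converting super-steps back to slots (multiplying by $K$ and taking a ceiling) produces the stated $\lceil \log(1/\zeta)/(K\epsilon^{(c+1)NK})\rceil K$.

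The main obstacle is establishing the uniform single-block minorization constant $\delta$ cleanly, because of the coupling between the action and the satisfaction updates through $r_i(t)=f_i(a(t))$. One must guarantee that the chosen reaching route keeps every node in the stochastic ``Else'' branch, so that the $\epsilon^{\,1-\lambda_i r_i/\lambda_{\max}}$ coin applies and is bounded below by $\epsilon$, rather than landing in the deterministic ``stay content'' branch. For G-NUM this is more delicate, since a content node repeats $a_i(t-K)$ rather than $a_i(t-1)$ and the stay-content branch additionally requires the payoff to be unchanged across the entire $K$-window. I would handle this by routing every node through a discontent slot first, after which the uniform action choice $1/|\mathcal{A}_i|$ together with the Else-branch coin gives a clean product bound, and then verifying that the window can be filled with $a^{\star}$ over the remaining slots. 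Once $\delta$ (respectively $\delta_K$) is in hand, the contraction and the elementary inequality $(1-\delta)^t\le e^{-\delta t}$ finish both parts.
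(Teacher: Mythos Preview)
Your approach is essentially the paper's: both use a Doeblin/Dobrushin one-step (respectively $K$-step) minorization to get a geometric contraction $d_v(\pi_t,\pi_\epsilon)\le(1-\delta)^t$ and then invert. The route you describe---each node is forced to the target action (cost $\epsilon^c$) and then flips content in the ``Else'' branch (cost $\ge\epsilon$)---is exactly the transition the paper lower-bounds.

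The one substantive difference is that you minorize onto a \emph{single} target state $z=(a^\star,\mathbf 1)$, which gives $\delta\ge \epsilon^{(c+1)N}/\prod_i(|\mathcal A_i|-1)$ and hence a bound of order $|\mathcal A|\cdot\log(1/\zeta)/\epsilon^{(c+1)N}$, not the constant-free bound stated. The paper instead uses the full Dobrushin bound $\delta(P)\le 1-\sum_k p_{\min,k}$ and sums $p_{\min,k}\ge \epsilon^{(c+1)N}/|\mathcal A|$ over \emph{all} $|\mathcal A|$ content states $k=(a,\mathbf 1)$; the $|\mathcal A|$ factors cancel, yielding $\delta(P)\le 1-\epsilon^{(c+1)N}$ exactly. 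The same summation over $|\mathcal A|^K$ content windows is what removes the action-set dependence in the G-NUM bound. So your plan is correct in spirit but, to match the stated bounds without extraneous constants, replace the single-target minorization by the sum over all content targets.
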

\begin{remark}
A small $\epsilon$ would provide a better approximation of \eqref{Max weight problem}, however this would lead to a large mixing time and in turn  a large frame size.
\end{remark}

\begin{remark}
Lemma \ref{lemma mixing time} suggests that, for G-NUM, the mixing time grows exponentially 
with $K$. This validates the usage of C-NUM  
against G-NUM for concave utilities, since in G-NUM we increase  
 $K$ for a larger rate region. 
\end{remark}

Now we shall state our main result,
\begin{theorem}
\label{optimality of Alg2}
 Assume $U_i$'s are increasing, strictly concave functions satisfying Assumptions \ref{assumption Interdependence} and \ref{bounded 1st derivative}.
 Then, for fixed frame size $T=\frac{N(V+1)}{\eta\epsilon^{(c+1)N}}$, we have the following, \\
\par $1)$  For step sizes satisfying, 
\begin{align*}
\sum_j b(j)= \infty, \text{ and } \sum_j b^2(j) <\infty.
\end{align*} 
We have,
\begin{align*}
\lim\inf_{L \to \infty} \sum_i U_i(\hat{\bar{r}}_i(L)) \geq  \sum_i U_i(\bar{r}_i^*)  
- \lim\inf_{L \to \infty} \bar{\delta}(L) -\eta,
\end{align*}
and
\begin{align*}
\lim\sup_{t \to \infty} &\sum_i U_i(\bar{r}_i(t)) \geq \sum_i U_i(\bar{r}^*) 
 - \lim\inf_{L\to \infty}\bar{\delta}(L)-\eta,
\end{align*} 
where $\bar{\delta}(L)$ and $\hat{\bar{r}}(L)$ are the cesaro averages given by,  
\begin{align*}
&\bar{b}(L)= \sum_{l=0}^{L-1} b(l), \;\;
\bar{\delta}(L)= \frac{1}{\bar{b}(L)} \sum_{l=0}^{L-1} b(l)\delta(\lambda(l))\\
& \hat{\bar{r}}(L)=\frac{1}{\bar{b}(L)} \sum_{t=0}^{L-1} b(l)\bar{r}(l).
\end{align*}
\par $2)$ For a fixed step size, $b(t)=b\;\; \forall t$, 
 \begin{align*}
 \lim &\inf_{L\to \infty}   \sum_i U_i\left( \frac{1}{L} \sum_{l=0}^{L-1}\bar{r}_i(l)\right) \geq \sum_i U_i(\bar{r}^*) \\
& -\lim\inf_{L\to\infty} \frac{1}{L}\sum_{l=0}^{L-1} \delta(\lambda(l)) - \eta -\frac{Nb}{2},
 \end{align*}
and 
 \begin{align*}
 \lim&\sup_{t\to\infty} \sum_i U_i(\bar{r}_i(t)) \geq \sum_i U_i(\bar{r}^*) \\
 & - \lim\inf_{L \to \infty} \frac{1}{L}\sum_{l=0}^{L-1} \delta(\lambda(l))-\eta-\frac{Nb}{2},
 \end{align*}
where $\bar{r}^*$ is the solution to the following optimization problem,
\begin{align}
\begin{aligned}
\max &\sum_i U_i(\bar{r}_i) \\
\text{s.t. } &\bar{r}_i \leq \sum_a p(a) r_i(a) \\ 
&\sum_a p(a) =1, \; p(a)\geq 0.
\end{aligned}
\end{align}
\begin{proof}
See appendix \ref{proof optimality of alg2}.
\end{proof}
\end{theorem}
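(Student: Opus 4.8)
The plan is to read the weight update \eqref{subgradient update} as a \emph{projected approximate subgradient} method for the dual minimization \eqref{dual min} and then run a standard primal-averaging (dual-subgradient recovery) argument, while carefully tracking two distinct sources of error in the subgradient estimate. Since \eqref{Formulation alg2} is concave with linear coupling constraints, strong duality gives $d(\lambda^*)=\sum_i U_i(\bar r_i^*)$, so it suffices to lower bound the weighted running average of the Lagrangian maximizers $\bar r_i(l)$. The \emph{exact} negative subgradient direction of $d$ at $\lambda(l)$ is $\bar r_i(l)-\sum_a p^*_{\lambda(l)}(a)r_i(a)$ (the maximizer of \eqref{Max weight problem} in the last term), whereas the algorithm uses $\tilde g_i(l):=\bar r_i(l)-s_i(l)$. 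I would decompose the discrepancy as $s_i(l)-\sum_a p^*_{\lambda(l)}(a)r_i(a)=\bigl(s_i(l)-\mathbb{E}_{\pi_\epsilon^{\lambda(l)}}[r_i]\bigr)+\bigl(\mathbb{E}_{\pi_\epsilon^{\lambda(l)}}[r_i]-\sum_a p^*_{\lambda(l)}(a)r_i(a)\bigr)$. Weighted by $\lambda(l)$, the second bracket is precisely the Max-weight suboptimality of the stationary distribution, namely $-\delta(\lambda(l))$, which is nonnegative and tends to $0$ as $\epsilon\to0$ by Theorem~\ref{thm optimality of algorithm 1} (with $K=1$); the first bracket is the finite-frame (mixing) error.

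First I would establish the core one-step inequality. Using that the projection onto $\{\lambda\ge 0\}$ is nonexpansive,
\[
\|\lambda(l+1)\|^2 \le \|\lambda(l)\|^2 + 2b(l)\langle \lambda(l),\tilde g(l)\rangle + b(l)^2\|\tilde g(l)\|^2 .
\]
Next, the Lagrangian identity at the maximizer reads $d(\lambda(l))=\sum_i U_i(\bar r_i(l))-\langle\lambda(l),\,g^{\mathrm{ex}}(l)\rangle$, where $g^{\mathrm{ex}}_i(l)=\bar r_i(l)-\sum_a p^*_{\lambda(l)}(a)r_i(a)$; combining this with $d(\lambda(l))\ge d(\lambda^*)=\sum_i U_i(\bar r_i^*)$ and substituting $g^{\mathrm{ex}}(l)=\tilde g(l)+e(l)$ from the decomposition above gives the per-frame bound $\sum_i U_i(\bar r_i(l))\ge \sum_i U_i(\bar r_i^*)+\langle\lambda(l),\tilde g(l)\rangle+\langle\lambda(l),e(l)\rangle$. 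I would then multiply by $b(l)$, sum over $l$, and telescope the $\|\lambda\|^2$ terms using the displayed drift, and finally apply Jensen's inequality (concavity of $U_i$) to pass from $\tfrac{1}{\bar b(L)}\sum_l b(l)\sum_i U_i(\bar r_i(l))$ to $\sum_i U_i(\hat{\bar r}_i(L))$.

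The error bookkeeping then splits along the two regimes and yields the stated constants. The telescoped residual is $\bigl(\|\lambda(0)\|^2+\sum_l b(l)^2\|\tilde g(l)\|^2\bigr)/(2\bar b(L))$: for diminishing steps $\sum_l b^2(l)<\infty$ and $\bar b(L)\to\infty$ drive it to $0$, whereas for constant step it converges to $\tfrac{b}{2}\|\tilde g\|^2\le\tfrac{Nb}{2}$ (each $|\tilde g_i|\le 1$), which is exactly the extra penalty in part~$2$. The term $\langle\lambda(l),e(l)\rangle$ contributes $-\delta(\lambda(l))$ (the second bracket), whose $b(l)$-weighted average is the Cesaro quantity $\bar\delta(L)$, plus the finite-frame bias (the first bracket). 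For the latter I would invoke Lemma~\ref{lemma mixing time}: starting from an arbitrary frame-initial state, the expected time-average of a bounded payoff deviates from its stationary mean by $O(\tau/T)$; since $\lambda_i(l)\le V+1$ (Lemma~\ref{lemma bounded dual parameter}) over $N$ nodes and $\tau\sim\epsilon^{-(c+1)N}$, the choice $T=\frac{N(V+1)}{\eta\epsilon^{(c+1)N}}$ makes the total weighted frame bias at most $\eta$, with the $\epsilon^{-(c+1)N}$ factors cancelling deliberately. The zero-mean remainder $s_i(l)-\mathbb{E}[s_i(l)\mid\mathcal F_l]$ is a martingale difference whose $b(l)$-weighted contribution vanishes (a.s.\ for diminishing steps since $\sum_l b^2<\infty$, and in time-average for the constant step). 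Assembling these pieces gives both the first inequality of part~$1$ and of part~$2$.

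Finally, the second inequality in each part, on the \emph{realized} time-average payoffs $\bar r_i(t)=\frac1t\sum_{s\le t}r_i(s)$, I would obtain by relating $\bar r_i(t)$ to the Cesaro average of the stationary service rates $\mathbb{E}_{\pi_\epsilon^{\lambda(l)}}[r_i]$ and thence to the flow allocation: boundedness of $\lambda$ (Lemma~\ref{lemma bounded dual parameter}) certifies that the coupling constraint $\bar r_i\le\sum_a p(a)r_i(a)$ is met on average, so monotonicity and concavity of $U_i$ let the realized rates inherit the near-optimality of $\hat{\bar r}(L)$; the statement is phrased as a $\limsup$ because near-optimality of the realized rates can be certified only along the (recurrent) subsequence of frames where $\lambda(l)$ revisits a neighborhood of $\lambda^*$. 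I expect the main obstacle to be precisely this approximate-subgradient control: because the induced chain is \emph{not} reset between frames and $\lambda$ changes from frame to frame, the frame-averaging error must be bounded from an arbitrary initial state, which is why the \emph{uniform} mixing bound of Lemma~\ref{lemma mixing time} (rather than a one-shot stationarity estimate) is indispensable, and why the two error sources $\delta$ and $\eta$ must be tracked simultaneously instead of being sent to zero in isolation.
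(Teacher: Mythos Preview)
Your argument for the $\liminf$ inequalities is essentially the paper's proof: the same two--term error decomposition (stationary--vs--max-weight, giving $\delta(\lambda(l))$, and empirical--vs--stationary within a frame, giving the mixing error $e(l)$), the same quadratic drift $\|\lambda(l+1)\|^2\le\|\lambda(l)\|^2+2b(l)\langle\lambda(l),\tilde g(l)\rangle+b(l)^2\|\tilde g(l)\|^2$ telescoped over frames, the same bound $\mathbb{E}[e(l)\mid\mathcal F_l]\le\eta$ from Lemma~\ref{lemma mixing time} at the stated frame size, the same martingale treatment of the centered part of $e(l)$ (martingale convergence for diminishing steps, Kronecker's lemma for fixed step), and Jensen at the end. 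Your route via $d(\lambda(l))\ge d(\lambda^*)$ is equivalent to the paper's two--line derivation from optimality of $\bar r_i(l)$ in \eqref{flow control problem} together with feasibility of $\bar r^*$.

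The one real discrepancy is the second inequality in each part. You interpret $\bar r_i(t)$ as the \emph{realized} time--average payoff $\tfrac1t\sum_{s\le t}r_i(s)$ and then sketch a feasibility--plus--monotonicity argument, invoking recurrence of $\lambda(l)$ near $\lambda^*$, to transfer near--optimality from $\hat{\bar r}(L)$. In the paper, $\bar r_i(l)$ is throughout the flow allocation $\arg\max_{\alpha\in[0,1]}\{U_i(\alpha)-\alpha\lambda_i(l)\}$, and the $\limsup$ assertion is just the standard best--iterate conclusion for an approximate subgradient method: the paper simply invokes Lemma~2.1 of Kiwiel~\cite{doi:10.1137/S1052623400376366} to obtain $\limsup_l \sum_i U_i(\bar r_i(l))\ge \sum_i U_i(\bar r_i^*)-\bar\delta-\eta$ (with the extra $Nb/2$ in the fixed--step case) directly from the same telescoped drift inequality. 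No separate argument about realized service rates, bounded $\lambda$, or visits to a neighborhood of $\lambda^*$ is needed; your proposed path for this part is more elaborate than required and rests on a misreading of what $\bar r_i(\cdot)$ denotes.
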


\section{Numerical Examples}
\label{section Numerical examples}
\subsection{Illustration}
In this subsection, we illustrate some aspects of C-NUM.
We consider an example with two nodes.  
Each node has two actions to choose from, namely $a_1$ and $a_2$.
The payoff table is shown in Figure \ref{Illustr1}. 
  When nodes choose 
 different actions, the node choosing action $a_2$ gets a higher payoff. However the 
 payoffs are asymmetric, i.e. when node $2$ chooses $a_2$, it gets a 
 payoff of $0.8$ and when node $1$ chooses $a_2$, it gets a payoff of $1$
 (provided the other node chooses $a_1$). In this example, we aim to 
 maximize the sum of log utility, $\sum_i \log(1+\bar{r}_i)$.
\subsubsection{ Effect of frame length~$T$ and number of iterations~$L$}
\par    
\begin{figure}[!ht]
\centering
\captionsetup{justification=centering}
\begin{tabular}{ r|c|c| }
\multicolumn{1}{r}{}
 &  \multicolumn{1}{c}{$a_1$}
 & \multicolumn{1}{c}{$a_2$} \\
\cline{2-3}
$a_1$ & $(0.0001,0.0001)$ & $(0.001,0.8) $ \\
\cline{2-3}
$a_2$ & $(1,0.001)$ & $(0.01,0.01)$ \\
\cline{2-3}
\end{tabular}
\caption{Payoff table ($s_1$ , $s_2$ ) }
\label{Illustr1}
\end{figure}
 In this subsection, we study C-NUM for different frame lengths and  
 iterations of the sub-gradient algorithm.  
  We choose $\epsilon~=~0.01$. We run C-NUM
   for $200$ and $10^6$ frames. 
  Figures~\ref{Plot:Ex1 effect of frame size} and~\ref{Plot:Ex1 effect of frame size 2}
  show the utility of nodes $1$ and $2$ respectively, where C-NUM is run for 
  $200$~frames, with frame lengths $10^4$, $10^6$ and $10^7$ slots. 
   For reference, we plot Max weight and Gibbs sampling based algorithms 
   (requires complete information). 
  In Figure \ref{Plot:Ex1 effect of frame size 3}, we plot the utility 
  of nodes~$1$ and $2$, where C-NUM is run for  $10^6$~frames, with frame lengths of $100$ and $10000$. 
  From the Figures~\ref{Plot:Ex1 effect of frame size}, \ref{Plot:Ex1 effect of 
frame size 2} and \ref{Plot:Ex1 effect of frame size 3}, we observe the following,
\begin{enumerate}
\item For this example, the approximate gradient algorithm converges.
\item For a smaller number of frames $(200)$ 
(See Figure \ref{Plot:Ex1 effect of frame size} and 
\ref{Plot:Ex1 effect of frame size 2}), C-NUM converges  with frame 
lengths of $10^6$ and $10^7$.  
Also the utilities are close to that got by the Gibbs sampling algorithm. 
\item For a larger number of frames, i.e. $(10^6)$, 
even for smaller frame length of $100$ slots, the algorithm converges. 
This is explained by stochastic approximation \cite{borkar2006stochastic}. 
 If the sub-gradient algorithm converges, then convergence is 
 guaranteed irrespective of the frame length. However, the rate of convergence 
 depends on frame length. 
\end{enumerate} 
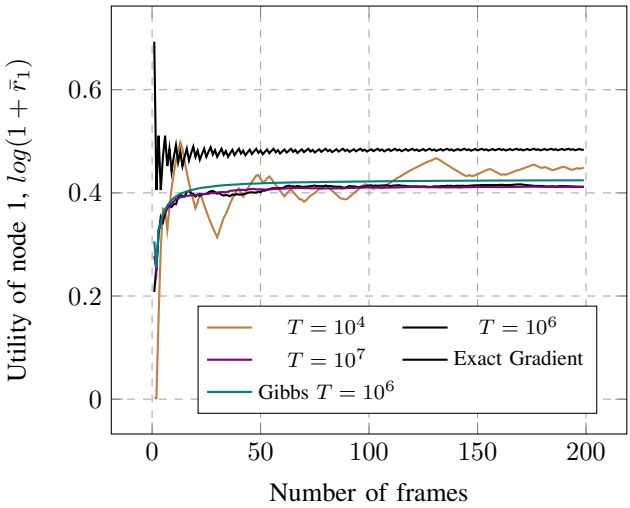
\begin{figure}[ht!]
\begin{center}
\begin{tikzpicture}
\pgfplotstableread{Illustration/example1_e2_001_200_utility.txt} \exa
\pgfplotstableread{Illustration/example1_e4_001_200_utility.txt} \exc
\pgfplotstableread{Illustration/example1_e6_001_200_utility.txt} \exe
\pgfplotstableread{Illustration/example1_e7_001_200_utility.txt} \exf
\pgfplotstableread{Illustration/ex1_exact_grad_200_utility.txt}\exg
\pgfplotstableread{Illustration/gibbs_ex1_e6_001_200_utility.txt}\exh
\begin{axis}[
name=plot1,
grid=major,
grid style={dashed, draw=gray!70},
xlabel={Number of frames},
ylabel={Utility of node 1, $log(1+\bar{r}_1)$},
legend style={at={(0.555,0.3)}, anchor=north,legend columns=2,font=\footnotesize}
]
\addplot +[mark=none,mark size=3,brown,thick]table[x=frames,y=Utility1]\exc;
\addplot +[mark=none,mark size=3,black,thick]table[x=frames,y=Utility1]\exe;
\addplot +[mark=none,mark size=3,violet,thick]table[x=frames,y=Utility1]\exf;
\addplot+[mark=none,mark size=3,thick,black]table[x=frames,y=Utility1]\exg;
\addplot+[mark=none,mark size=3,thick,teal]table[x=frames,y=Utility1]\exh;
\legend{$T=10^4$,$T=10^6$,$T=10^7$, Exact Gradient,Gibbs $T=10^6$ }
\end{axis}
\end{tikzpicture}
\caption{Utility of node $1$ with C-NUM in a simple example with two nodes. 
C-NUM is run for different frame lengths of $100$, $10^4$, $10^6$ and $10^7$. Exact gradient and Gibbs sampling based algorithms are plotted for reference.}
\label{Plot:Ex1 effect of frame size}
\end{center}
\end{figure} 

\begin{figure}[ht!]
\begin{center}
\begin{tikzpicture}
\pgfplotstableread{Illustration/example1_e2_001_200_utility.txt} \exa
\pgfplotstableread{Illustration/example1_e4_001_200_utility.txt} \exc
\pgfplotstableread{Illustration/example1_e6_001_200_utility.txt} \exe
\pgfplotstableread{Illustration/example1_e7_001_200_utility.txt} \exf
\pgfplotstableread{Illustration/ex1_exact_grad_200_utility.txt}\exg
\pgfplotstableread{Illustration/gibbs_ex1_e6_001_200_utility.txt}\exh
\begin{axis}[
name=plot1,
grid=major,
grid style={dashed, draw=gray!70},
xlabel={Number of frames},
ylabel={Utility of node $2$, $log(1+\bar{r}_2)$},
legend style={at={(0.555,0.4)}, anchor=north,legend columns=2,font=\footnotesize}
]
\addplot +[mark=none,mark size=3,brown,thick]table[x=frames,y=Utility2]\exc;
\addplot +[mark=none,mark size=3,black,thick]table[x=frames,y=Utility2]\exe;
\addplot +[mark=none,mark size=3,violet,thick]table[x=frames,y=Utility2]\exf;
\addplot+[mark=none,mark size=3,thick,black]table[x=frames,y=Utility2]\exg;
\addplot+[mark=none,mark size=3,thick,teal]table[x=frames,y=Utility2]\exh;
\legend{$T=10^4$,$T=10^6$,$T=10^7$, Exact Gradient,Gibbs $T=10^6$ }
\end{axis}
\end{tikzpicture}
\caption{Utility of node $2$ with C-NUM, for the simple game with two nodes in Figure 
\ref{Illustr1}. C-NUM is run for different frame lengths of $100$, $10^4$, $10^6$ and 
$10^7$. Exact gradient and Gibbs sampling based algorithms are plotted for 
reference.}
\label{Plot:Ex1 effect of frame size 2}
\end{center}
\end{figure}
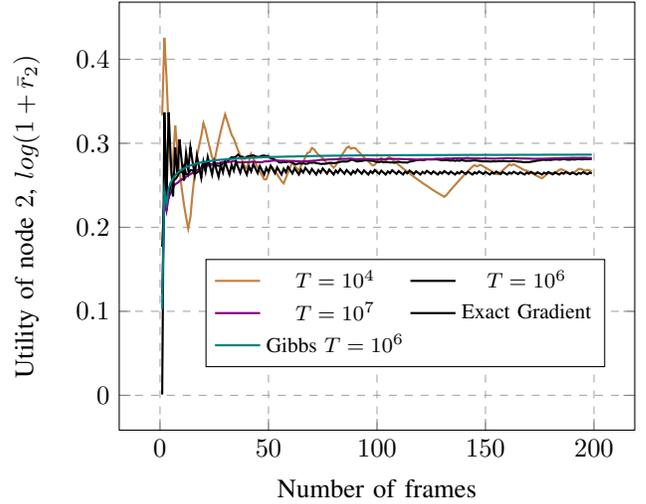 

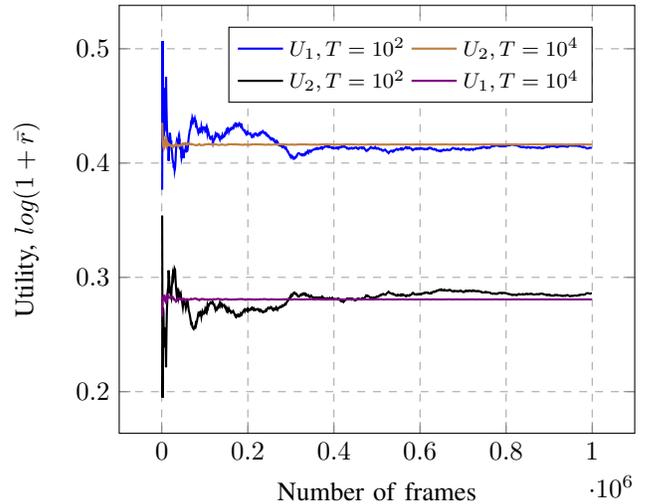
\begin{figure}[ht!]
\begin{center}
\begin{tikzpicture}
\pgfplotstableread{Illustration/example1_e2_001_1000000_utility.txt} \exa
\pgfplotstableread{Illustration/example1_e4_001_1000000_utility.txt} \exc
\pgfplotstableread{Illustration/example1_e6_001_200_utility.txt} \exe
\pgfplotstableread{Illustration/example1_e7_001_200_utility.txt} \exf
\pgfplotstableread{Illustration/ex1_exact_grad_200_utility.txt}\exg
\pgfplotstableread{Illustration/gibbs_ex1_e6_001_200_utility.txt}\exh
\begin{axis}[
name=plot1,
grid=major,
grid style={dashed, draw=gray!70},
xlabel={Number of frames},
ylabel={Utility, $log(1+\bar{r})$},
legend style={at={(0.565,0.95)}, anchor=north,legend columns=2,font=\footnotesize}
]
\addplot+[mark=none,blue,thick] table[x=frames,y=Utility1]\exa;
\addplot +[mark=none,mark size=3,brown,thick]table[x=frames,y=Utility1]\exc;
\addplot +[mark=none,mark size=3,black,thick]table[x=frames,y=Utility2]\exa;
\addplot +[mark=none,mark size=3,violet,thick]table[x=frames,y=Utility2]\exc;
\legend{{$U_1, T=10^2$},{$U_2, T=10^4$},{$U_2, T=10^2$},{$U_1, T=10^4$}}
\end{axis}
\end{tikzpicture}
\caption{Utilities of nodes $1$ and $2$ for the two node game in Figure
\ref{Illustr1}. C-NUM is run for $10^6$ frames, with frame lengths of $100$ and $10000$.}
\label{Plot:Ex1 effect of frame size 3}
\end{center}
\end{figure} 
\subsubsection{ Effect of $\epsilon$}
In this subsection, we study the effect of $\epsilon$ on the performance of 
C-NUM. We run C-NUM for $\epsilon = 0.1,0.01,0.001$ and $0.0001$. 
We fix the  
 frame size as $10^6$~slots and run the algorithm for $200$~frames. 
 In Figure~\ref{Plot:Ex1 effect of epsilon}, we  plot the sum utility for 
 different values of $\epsilon$. 
 We observe that, as $\epsilon $ is decreased from $0.1$ to $0.01$, there is a 
 significant increase in sum utility. However, 
from $\epsilon =0.01$ to $0.001$, there is only a marginal increase in the utility.
 As $\epsilon$ is further reduced to $0.0001$,  the algorithm doesn't 
 converge for the chosen frame size and iteration duration.
 
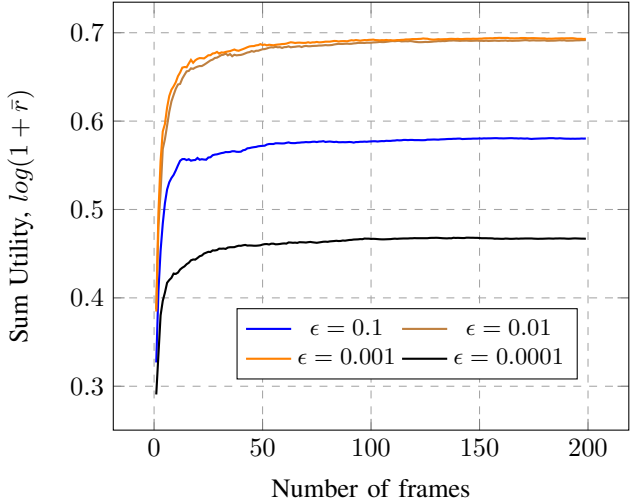
\begin{figure}[ht!]
\begin{center}
\begin{tikzpicture}
\pgfplotstableread{Illustration/example1_e6_1_200_utility.txt} \exa
\pgfplotstableread{Illustration/example1_e6_01_200_utility.txt} \exc
\pgfplotstableread{Illustration/example1_e6_001_200_utility.txt} \exd
\pgfplotstableread{Illustration/example1_e6_0001_200_utility.txt} \exe
\begin{axis}[
name=plot1,
grid=major,
grid style={dashed, draw=gray!70},
xlabel={Number of frames},
ylabel={Sum Utility, $log(1+\bar{r})$},
legend style={at={(0.57,0.285)}, anchor=north,legend columns=2,font=\small}
]
\addplot+[mark=none,blue,thick] table[x=frames,y=SumUtility]\exa;
\addplot +[mark=none,mark size=3,brown,thick]table[x=frames,y=SumUtility]\exc;
\addplot +[mark=none,mark size=3,orange,thick]table[x=frames,y=SumUtility]\exd;
\addplot +[mark=none,mark size=3,black,thick]table[x=frames,y=SumUtility]\exe;
\legend{$\epsilon=0.1$,$\epsilon=0.01$,$\epsilon=0.001$,$\epsilon=0.0001$, $\epsilon=0.00001$ }
\end{axis}
\end{tikzpicture}
\caption{Sum Utility of the average rate with C-NUM for the two node game in \ref{Illustr1}. We run C-NUM for different values of $\epsilon$ varying from $0.1$ to
0.0001}
\label{Plot:Ex1 effect of epsilon}
\end{center}
\end{figure}

\subsection{Example Scenarios} 
In this subsection, we will illustrate G-NUM and C-NUM for two applications 
namely User Association and Channel Selection in WiFi networks.
\subsubsection{User Association}
Here, nodes (players) correspond to the users and 
actions correspond to the set of Access Points (APs). 
The payoff of user $i$ corresponds to the throughput $r_i(t)$.
We consider a fixed IEEE $802.11$ac WiFi network with $2$ Access Points (APs)  and 
$7$ users. The performance for the network
configuration was evaluated using the network simulator ns-3 with the 
following configuration parameters. 
The APs are placed $50$ meters apart from each other. The UEs are 
dropped uniformly around the APs over a square of $50$ meters. 
Each user could associate to either of the two APs. 
We let the APs operate in orthogonal $20$ MHz Channels with a 
maximum achievable throughput of $6.5$ Mbps. 
We consider up-link traffic with saturated queues.
We maximize the sum utility function $\sum_i \log(\delta+\bar{r}_i)$, 
where $\bar{r}_i$ is the average throughput of user $i$. Log utility achieves a 
proportional fair solution \cite{1310314}. Since log utility is unbounded, we 
use $\log(\delta+\bar{r})$ ($\delta>0$).
We plot the normalized sum utility of the users for G-NUM and C-NUM in 
Figure \ref{Plot:User Association} with $\epsilon = 0.2$.
We also plot the Max-Weight based utility maximization algorithm \cite{4455486} 
for reference. 
We observe that the sum-utility of C-NUM and G-NUM is around $0.42$.
The performance is close to the Max-weight based algorithm \cite{4455486} 
which achieves a utility around $0.45$.
\begin{figure}[!ht]
\begin{center}
\begin{tikzpicture}
\pgfplotstableread{geo_utility_recent/User_assoc/0_2Gnum_11.txt} \exg
\pgfplotstableread{geo_utility_recent/User_assoc/01_02Cnum8_3.txt} \exh
\pgfplotstableread{geo_utility_recent/User_assoc/Exactgrad_11.txt}\exi
\begin{axis}[
name=plot1,
xmin=0.0, xmax = 40000000000,
ymin=0.3, ymax=0.475,
grid=major,
grid style={dashed, draw=gray!70},
xlabel={Number of slots},
ylabel={Sum Utility, $\sum_i \log(\delta+\bar{r}_i)$},
legend style={at={(0.65,0.4)}, anchor=north,legend columns=1}
]
\addplot+[mark=none,blue,very thick] table[x=Slots,y=SumUtility]\exg;
\addplot+[mark=none,red,dashed,very thick] table[x=Slots,y=SumUtility]\exh;
\addplot+[mark=none,teal,densely dotted,very thick] table[x=Slots,y=SumUtility]\exi;
\legend{GNUM (K=2),CNUM, Exact Gradient}
\end{axis}
\end{tikzpicture}
\caption{Sum Utility of the nodes for G-NUM and C-NUM for a IEEE 802.11ac WiFi 
network for 7 users and 2 Access points, with $\epsilon=0.2$. The performance of 
Exact gradient algorithm \cite{4455486} is shown for reference.} 

\label{Plot:User Association}
\end{center}
\end{figure}
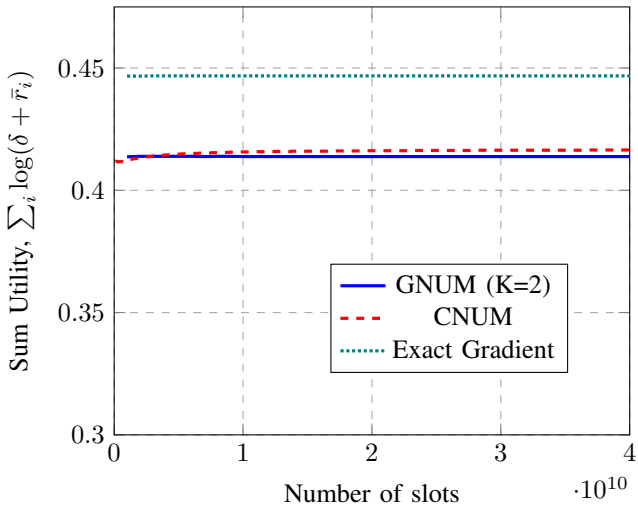 

\subsection{Channel Selection}
In the channel selection problem, nodes (players) 
correspond to ad hoc WiFi links (transmitter-receiver pair) and actions correspond 
to the set of channels that each link could operate. 
The payoff for link $i$ is the throughput $r_i(t)$ it receives.
For the channel selection example, we consider $5$ WiFi 
links (transmitter-receiver pairs) dropped uniformly 
in an area of $100 $ square meters. We assume that each link could operate in 
one of the three $20$ Mhz channels. We fix the network utility as
 $\sum_i \log(\delta+\bar{r}_i)$. We plot the normalized network utility for G-NUM
 and C-NUM in Figure \ref{Plot:Channel Selection}, with $\epsilon = 0.1$. 
 We also plot the Max-weight based algorithm \cite{4455486} for reference. 
 We observe that C-NUM achieves a utility of 0.71 and G-NUM achieves a utility of 
 0.69. The performance is very close to the Max-Weight based algorithm, which 
achieves a utility of 0.73. In comparison to the previous user association example in 
Figure \ref{Plot:User Association}, C-NUM and G-NUM performs better here. This is 
because $\epsilon$ was fixed at $0.2$ in the user association example, 
whereas here it is $0.1$.  

\subsubsection{Key Observations:}  
\begin{enumerate}
 \item [i)] We see that G-NUM and C-NUM perform close to the 
 Max-Weight based algorithm \cite{4455486}. However, there is a small difference 
 in the performance which is due to $\epsilon$. Recall that both G-NUM and C-NUM 
 are optimal as $\epsilon \to 0$. 
\item [ii)] Additionally, we observe that C-NUM outperforms G-NUM for the same 
$\epsilon$. 
To explain this, consider $\Omega^*$, the set of stochastically stable states 
i.e. the set of states having a positive stationary mass as $\epsilon \to 0$. 
We also know that, $\Omega^*$ corresponds to the states having minimum 
stochastic potential (See Definition 3). In G-NUM  $\Omega^*$ is the sequence of 
optimal action sequences and in C-NUM $\Omega^*$ is the actions with maximum 
weight. The stationary mass of $\Omega^*$ is bounded as follows,
\begin{align*}
\pi_{\epsilon}(\Omega^*)&\geq \Theta \left(\frac{|\Omega^*|\epsilon^{\gamma_{\min}}}{|\Omega^*|\epsilon^{\gamma_{\min}}+|\Omega\setminus \Omega^*| \epsilon^{\gamma_2}}\right) \\
\pi_{\epsilon}(\Omega^*)&\leq \Theta\left(\frac{|\Omega^*|\epsilon^{\gamma_{\min}}}{|\Omega^*|\epsilon^{\gamma_{\min}}+|\Omega\setminus \Omega^*| \epsilon^{\gamma_{\max}}}\right)
 \end{align*}
where, $\pi_{\epsilon}$ is the stationary distribution for fixed $\epsilon$; 
$\gamma_{\min}$ and $\gamma_{\max}$ are the minimum and maximum 
stochastic potential respectively, 
$\gamma_2 = \min_{x:\gamma(x)\neq \gamma_{\min}} \gamma(x)$ is the 
second smallest stochastic potential.
In G-NUM, $|\Omega \setminus \Omega^*|$ scales with $K$. This implies, for a fixed 
$\epsilon$, $\pi_{\epsilon}(\Omega^*)$ is closer to $1$
in C-NUM as compared to G-NUM. This explains why C-NUM performs better than 
G-NUM.
\end{enumerate}

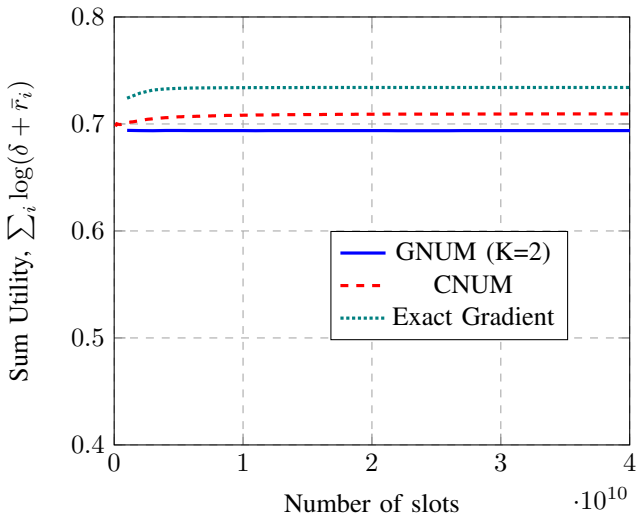
\begin{figure}[!ht]
\begin{center}
\begin{tikzpicture}
\pgfplotstableread{geo_utility_recent/Channel_selec/0_1Gnum_11.txt} \exa
\pgfplotstableread{geo_utility_recent/Channel_selec/1_Cnum_8_3.txt} \exb
\pgfplotstableread{geo_utility_recent/Channel_selec/Exactgrad_11.txt}\exc
\begin{axis}[
name=plot2,
xmin=0.0, xmax = 40000000000,
ymin=0.4, ymax=0.8,
grid=major,
grid style={dashed, draw=gray!70},
xlabel={Number of slots},
ylabel={Sum Utility, $\sum_i \log(\delta+\bar{r}_i)$},
legend style={at={(0.65,0.5)}, anchor=north,legend columns=1}
]
\addplot+[mark=none,blue,very thick] table[x=Slots,y=SumUtility]\exa;
\addplot+[mark=none,red,dashed,very thick] table[x=Slots,y=SumUtility]\exb;
\addplot+[mark=none,teal,densely dotted,very thick] table[x=Slots,y=SumUtility]\exc;
\legend{GNUM (K=2),CNUM, Exact Gradient}
\end{axis}
\end{tikzpicture}
\caption{Sum Utility of the nodes for G-NUM and C-NUM  for a WiFi ad hoc network
for $5$ links and $3$ Channels, with $\epsilon=0.1$. The performance of 
Exact gradient algorithm \cite{4455486} is shown for reference.} 
\label{Plot:Channel Selection}
\end{center}
\end{figure} 

\section{Summary and Comparisons}
\label{section:Summary and Comparisons}
In this section, we will summarize G-NUM and C-NUM.  
G-NUM has the advantage that it maximizes general utilities 
(not necessarily concave). However, we have seen in Lemma~$3$ that the mixing time 
of G-NUM grows exponentially in $K$ (for C-NUM, $K=1$). Also, we have 
seen through simulations that for a fixed $\epsilon$, C-NUM performs better than 
G-NUM.
\par Table~\ref{table_comparison} provides a comparison of G-NUM and C-NUM 
with other distributed algorithms. In Remark~1, 
we stated that the difference between the algorithms 
in \cite{5625654} and our model is the assumption on the network. 
The work in \cite{5625654} assumes a conflict graph model, whereas 
we assume interdependence. 
An important question here is if conflict graph imply interdependence. 
The answer is negative. To see this, we consider a linear network with three links 
where i) links $1$ and $2$ conflict and ii) links $2$
and $3$ conflict. In this case, when link $1$ is transmitting and link $2$ is idle, 
no action change
by link $3$ could change the service rates of links $1$ and $2$. 
This implies that interdependence is not satisfied here. 
 However, this is not a drawback, given that significant change in the 
interference can be used as a signal to indicate that some link has changed its action.  

\begin{table*}[t]
\renewcommand{\arraystretch}{2.5}
\centering
\begin{tabular}{|>{\centering\arraybackslash}p{3.5cm}|>{\centering\arraybackslash}p{2.5 cm}|>{\centering\arraybackslash}p{2.5 cm}|c|  }
 \hline
 Algorithm & Network Model & Utility & Mixing time \\
 \hline
 G-NUM   & Interdependence & General  & $ \left \lceil \frac{\log(\frac{1}{\zeta})}{K \epsilon^{(c+1)NK}} \right \rceil K$\\
 C-NUM & Interdependence    & Concave   &$\frac{log(\frac{1}{\zeta})}{\epsilon^{(c+1)N}}$\\
 Optimal CSMA \cite{5625654}  & Conflict Graph &Concave &$\log(\frac{1}{\zeta})\exp(\Theta(\beta N V))$ \\
 Parallel Glauber Dynamics \cite{6214604} &Conflict Graph with $\max$ degree $\Delta$ & Stabilizes any arrival with rate  $< \frac{1}{\Delta-1}$ & $O(\log N)$ \\
  Pareto Optimality Through Distributed Learning \cite{marden2014achieving} & Interdependence  & Sum-Rate Maximization & $\frac{log(\frac{1}{\zeta})}{\epsilon^{(c+1)N}}$ \\
 \hline
\end{tabular}
\caption{Comparison of G-NUM, C-NUM with other distributed algorithms}
\label{table_comparison}
\end{table*}

\section{Conclusions}
In this work, we have presented two completely uncoupled algorithms for 
utility maximization. This allows a fair allocation of resources, which prior works 
\cite{pradelski2012learning}, \cite{marden2014achieving} in this setup have
ignored. 
\par In the first algorithm, namely G-NUM, we discretize the rate
region, thereby allowing it to be applied for general (possibly non-concave) utilities. 
We show that the set of achievable points in the rate region could be increased by 
increasing the parameter $K$. However, the memory at each node increases as $K$ is 
increased. The state space of the Markov chain increases with $K$; and in Lemma 
\ref{lemma mixing time}, we show that the Mixing time upper bound grows 
exponentially with $K$.
 \par In the second part, we present another algorithm, C-NUM, which is a
 sub-gradient algorithm for concave utilities. In comparison to G-NUM, C-NUM only
  keeps track of the immediate history and the time average service rate. We show 
  convergence in Cesaro averages for decreasing step sizes and time average
  convergence for fixed step size. Through C-NUM, we show an interesting relationship
   between completely uncoupled algorithms and Gibbs sampling based utility 
   maximization algorithms. In future, we would like to compare these algorithms 
   for specific network models of interest.   
\section{Appendix}
\label{section Appendix}
%
\subsection{Proof of Theorem \ref{thm optimality of algorithm 1}:}
\label{Proof of Optimality of alg1}
\par To prove theorem \ref{thm optimality of algorithm 1}, we need to characterize the stationary distribution of $X_{\epsilon}(t)$ for small $\epsilon$. For such a characterization, we shall use the results from \cite{freidlin2012random},\cite{RePEc:ecm:emetrp:v:61:y:1993:i:1:p:57-84} on perturbed Markov chains. 
Let $P_{\epsilon}(x, y)$ denote the transition probability of the Markov chain $X_{\epsilon}$ from state $x$ to state $y$ . Consider the directed graph $\mathcal{G}$ with the states of the Markov chain as vertices and an edge from state $x$ to state $y$ if  $P_{\epsilon}(x , y)>0$. A spanning tree $T_x$ rooted at a vertex $x$ is called a x-tree, i.e. there exist a path from any vertex to $x$ and, $T_x \subseteq \mathcal{G}$ does not contain any cycles. Let the set of all trees rooted at state $x$ be $\mathcal{T}_x$. 

We need the following additional definitions from the theory of regular perturbed Markov processes
from \cite{RePEc:ecm:emetrp:v:61:y:1993:i:1:p:57-84}.
\begin{enumerate}
\item $ \begin{aligned}[t]
\forall x,y \in \Omega,\ \text{if\ } &P^{\epsilon} ( x,y) > 0 \text{ for some } \epsilon > 0, \text{then,}
\end{aligned}$   
\[ 0< \lim_{\epsilon \to 0} \frac{ P^{\epsilon} (x,y)}{\epsilon^{r(x,y)}} < \infty, \]
 $r(x,y)\geq 0$ is defined as the resistance of the transition $x \to y$ (See Definition \ref{defn Regular perturbed Markov Chain}).
\item Consider a sequence of transitions (or a path) $P=x_1 \to x_2 \to ... \to x_k$. 
The resistance of the path is defined as the  sum of the resistances of the one-step 
transitions in the path,
i.e., $r(x_1, x_2)+...+r(x_{k-1}, x_k)$.

\item The resistance from state $x$ to any other state 
$y$, $\rho(x,y)$ is the minimum resistance over all paths 
from $x$ to $y$.
\item The resistance of the tree rooted at $x$, $\rho(x)$ is the sum of the resistances 
of the edges in the tree.
\end{enumerate}
 
Let $\mathcal{T}(a)$ be the set of all trees rooted at state $x$ with action profile sequence $a=(a_1,\cdots,a_K)$. Let $r_{\min}(a)$ be the minimum resistance of all the trees in $\mathcal{T}(a)$. 
\par A state in the Markov chain $X_{\epsilon}$ is of the form, $x=(a,q)$ where 
$a= (a_1,a_2,\cdots,a_K) \in \mathcal{A}^K$ is the sequence of $K$ action profiles 
and $q \in \{0,1\}^N $ is the satisfaction variable of the nodes. 
For any state $x=(a,q)$, we shall use the following definition from 
\cite{RePEc:ecm:emetrp:v:61:y:1993:i:1:p:57-84}, 
\begin{defn}
\label{label stochastic potential}
 Stochastic potential of a state $x$, denoted by $\gamma(x)$,  is the minimum resistance over all the trees rooted at $x$.
 \end{defn}

\begin{lemma}
\label{lemma: states with r_min}
The tree with minimum resistance in $\mathcal{T}(a)$ is rooted at  state $(a,\vec{1})$, i.e. when all the nodes are content. The minimum resistance $r_{\min}(a)$  is given by $c(N-1)+\sum_i 1-U_i\left(\frac{r_i(a(1))+\cdots+r_i(a(K))}{K}\right)$. Further, for any other state $x=(a^{\prime},q)$, where $q\neq \vec{1}$, $r_{\min}(a)<\rho(x),\;\; \forall a\in\mathcal{A}$.
\begin{proof}
A tree $T\in \mathcal{T}(a)$ is rooted at a state $x=(a,q)$, where $q$ could take values in $\{0,1\}^N$. Define the following
\begin{enumerate}
\item[a)] $a^1 = (a,\vec{1})$, where all the nodes are content 
\item[b)] $a^0 = (a,\vec{0})$, where all the nodes are discontent 
\item[c)] $a^q = (a,q)$, where $q$ is a vector with some zeros and some ones (where some nodes are content and some are discontent).
\end{enumerate}

We have the following results:
\begin{enumerate}
\item $\rho(a^1,x) \geq c, \;\; \forall x\neq a^1$. 
For a transition from $a^1$ to a different state to take place, at least one node should change its action. A content node changes its action with probability $\frac{\epsilon^c}{|A_i|-1}$ which has resistance $c$.
\item $\rho(a^1,b^0) = c$. Once a node becomes discontent (which happens with resistance $c$), the other nodes become discontent with zero resistance due to interdependence. 
\item ${\scriptsize \rho(b^0,a^1) = N - \sum_i U_i\left(\frac{f_i(a(1)+\cdots+f_i(a(K))}{K}\right)}; a,b~\in~\mathcal{A} $. 
The resistance for node $i$ to become content is ${\scriptsize1-U_i\left(\frac{f_i(a(1))+...+f_N(a(K))}{K} \right)}$. Since every user must become content from $(b,\vec{0})$, we have the result.
\item $c\leq\rho(a^1,b^1)<2c$. Here at least one node should change its action which happens with resistance $c$. The upper bound follows from the following, $\begin{aligned}[t]\rho(a^1,b^1) \leq \rho(a^1,c^0)+\rho(c^0,b^1)  < 2c\end{aligned}$
\item $\rho(a^q,a^0)=0$. In state $a^q$, some nodes are discontent and due to interdependence, with zero resistance all the nodes become discontent. 
\end{enumerate}
Now, from Lemma $4.3$ from \cite{marden2014achieving}, we have, 
{\small\begin{align}
  \label{potential of content state}
&\gamma(a,\vec{1}) =c(|\mathcal{A}|^K-1)+ N-\sum_iU_i\left(\frac{ r_i(a(1))+\cdots+r_i(a(K))}{K}\right)\\
\label{potential of discontent state}
&\text{Also we have, }\gamma(a,\vec{0})=|\mathcal{A}|^Kc, \; \; \forall a \in \mathcal{A}^K.
\end{align}}
\eqref{potential of discontent state} follows from $5)$ and since every outgoing edge from $a^1$ has resistance $c$ (there are $|\mathcal{A}|^K$ of them). \\     
The stochastic potential of a state $a^q$ is greater than or equal to the stochastic potential of $a^0$, i.e. 
\begin{align}
\label{potential of q state}
\gamma(a^q)\geq|\mathcal{A}|^Kc, \; \; \forall a \in \mathcal{A}^K.
\end{align}
To see \eqref{potential of q state}, let $T_{a^q}$ denote the tree rooted at state $x=(a,q)$ with resistance $\gamma(a^q)$. Due to interdependence, we know that there exists a zero resistance path from $a^q$ to $a^0$. Add the zero resistance path to $T_{a^q}$ and remove all the outgoing edges from $a^0$. This gives us a tree rooted at $a^0$ with no additional resistance. This implies $\gamma(a^q)\geq\gamma(a^0)$. Lemma \ref{lemma: states with r_min} follows from \eqref{potential of content state}, \eqref{potential of discontent state}, \eqref{potential of q state} and noting that $c>N$.
\end{proof}
\end{lemma}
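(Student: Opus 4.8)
The plan is to reduce everything to the minimum-resistance-tree characterization of stochastic potential (Freidlin--Wentzell, \cite{RePEc:ecm:emetrp:v:61:y:1993:i:1:p:57-84}) and to carry out the computation by first pinning down a short list of elementary transition resistances. I would organize the states around the three families sharing a fixed $K$-window action sequence $a$: the all-content state $a^1=(a,\vec 1)$, the all-discontent state $a^0=(a,\vec 0)$, and the mixed states $a^q=(a,q)$ with $q\notin\{\vec 0,\vec 1\}$. The transition rule of G-NUM makes each elementary resistance immediate. Leaving $a^1$ requires some content node to abandon its repeated action, which it does with probability $\epsilon^c/(|\mathcal A_i|-1)$, so $\rho(a^1,x)\ge c$ for all $x\ne a^1$. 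By Assumption~\ref{assumption Interdependence}, that single deviation is perceived by another node as a payoff change, launching a cascade in which every node turns discontent at zero resistance; this yields $\rho(a^1,b^0)=c$ and $\rho(a^q,a^0)=0$. Finally, recovering $a^1$ from all-discontent forces each node $i$ to flip to content at resistance $1-U_i(\bar r_i)$, so $\rho(b^0,a^1)=\sum_i\bigl(1-U_i(\bar r_i)\bigr)=N-\sum_i U_i(\bar r_i)$, the $K$ random action choices needed to set up the window $a$ contributing zero resistance.

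With these edges in hand I would assemble the minimum-resistance in-tree rooted at each candidate state. The decisive structural fact is $\rho(a^q,a^0)=0$: every mixed state drains to all-discontent for free, so any in-tree rooted at a mixed or discontent state can be re-rooted at an all-content state by adjoining one zero-resistance cascade edge and deleting an outgoing edge of the old root, without raising the total resistance. Evaluating the resulting tree is exactly the counting argument of Lemma~4.3 in \cite{marden2014achieving}, adapted to the $K$-window chain; it gives
\begin{align*}
\gamma(a,\vec 1)&=c\bigl(|\mathcal A|^K-1\bigr)+N-\sum_i U_i(\bar r_i),\\
\gamma(a,\vec 0)&=c\,|\mathcal A|^K,\qquad \gamma(a^q)\ge c\,|\mathcal A|^K.
\end{align*}
The first line comes from the tree in which every competing content class sends an edge of resistance $c$ into the all-discontent class, which in turn sends its single edge of resistance $\sum_i(1-U_i(\bar r_i))$ into $a^1$; since minimizing over the root's satisfaction vector $q$ returns $q=\vec 1$, this value is $r_{\min}(a)$.

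To finish, the strict domination of the content root over all others is a one-line comparison driven by the design choice $c>N$. Subtracting potentials,
\begin{align*}
\gamma(a,\vec 0)-\gamma(a,\vec 1)=c-\Bigl(N-\sum_i U_i(\bar r_i)\Bigr)=c-N+\sum_i U_i(\bar r_i)>0,
\end{align*}
because $c>N$ and each $U_i\ge 0$; the same bound applied to $\gamma(a^q)\ge c\,|\mathcal A|^K$ eliminates every mixed root. Hence the minimum-resistance tree over states with window $a$ is rooted at $a^1$, which is what the lemma asserts.

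I expect the main obstacle to be the tree-surgery and counting step rather than the elementary resistances. The elementary resistances fall out once interdependence supplies the zero-resistance collapse, but showing that no in-tree rooted at a discontent or mixed state can undercut the all-content root requires proving the re-rooting surgery never lowers resistance, which rests precisely on $c>N$ bounding the cost $c$ of a single deviation against the cumulative cost $N-\sum_i U_i\le N$ of content recovery. The genuinely new bookkeeping relative to the $K=1$ analysis of \cite{marden2014achieving} is that each recurrent content class is now a cyclic orbit of the $K$-window of length up to $K$, so the enumeration of edges and the identification of recurrent classes must be redone at the level of these orbits.
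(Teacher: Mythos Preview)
Your proposal is correct and follows essentially the same route as the paper: the same three-way partition into $a^1,a^0,a^q$, the same list of elementary resistances derived from the G-NUM update rules and interdependence, the same appeal to Lemma~4.3 of \cite{marden2014achieving} for the potential of content states, the same tree-surgery inequality $\gamma(a^q)\ge\gamma(a^0)$, and the same final comparison driven by $c>N$. One small wording slip to clean up: the zero-resistance cascade edge you adjoin re-roots the tree at the \emph{all-discontent} state $a^0$, not at an all-content state; the subsequent strict improvement to $a^1$ is a separate step costing $N-\sum_i U_i(\bar r_i)$ while saving at least $c$, which is exactly where $c>N$ enters.
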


The following theorem from \cite{RePEc:ecm:emetrp:v:61:y:1993:i:1:p:57-84} completes the proof.
\begin{theorem}\cite{RePEc:ecm:emetrp:v:61:y:1993:i:1:p:57-84}. The stochastically stable states of a regular perturbed Markov chain $X_{\epsilon}(t)$ are the states having minimum stochastic potential.
\end{theorem}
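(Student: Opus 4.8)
The plan is to invoke the characterization of stochastically stable states as those minimizing the stochastic potential $\gamma(\cdot)$ (Young's theorem, cited at the end of the excerpt), so the whole task reduces to identifying the minimum-potential states and matching them to the optimizers of \eqref{formulation2}. Since Lemma \ref{regular perturbed Markov chain} already guarantees that $X_{\epsilon}(t)$ is a regular perturbed Markov chain, the resistances $r(x,y)$ are well defined, and I would first read them off from the transition probabilities of G-NUM. A content node switching away from $a_i(t-K)$ contributes resistance $c$; a node moving from discontent to content contributes resistance $1 - U_i\!\left(\frac{1}{K}\sum_j r_i(j)\right)$; and, crucially, Assumption \ref{assumption Interdependence} makes the discontent cascade free, so a move from any mixed state to the all-discontent state $(a,\vec 0)$ has zero resistance.

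Next, for each fixed action sequence $a \in \mathcal{A}^K$, I would show that among all states sharing that action sequence the minimum-resistance rooted tree is rooted at the all-content state $(a,\vec 1)$, and compute its resistance explicitly. The idea is a tree-surgery argument: every state admits a zero-resistance path to the all-discontent hub $(a,\vec 0)$ by interdependence, and from the hub the only way back to content is to pay, for each node $i$, the cost $1 - U_i(\bar r_i(a))$ of becoming content. Combining the hub routing with the fact that every outgoing edge from a content state costs $c$, I would derive
\begin{align*}
\gamma(a,\vec 1) = c\left(|\mathcal{A}|^K - 1\right) + N - \sum_i U_i\!\left(\frac{1}{K}\sum_{k=1}^K f_i(a(k))\right),
\end{align*}
together with $\gamma(a,\vec 0) = |\mathcal{A}|^K c$ and the bound $\gamma(a,q) \ge |\mathcal{A}|^K c$ for every mixed $q$. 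Since $c > N$ and the utilities lie in $[0,1]$, the content-state potentials strictly dominate the discontent and mixed potentials, so every minimum-potential state has the form $(a,\vec 1)$.

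With this in hand, minimizing $\gamma$ over content states amounts to minimizing $N - \sum_i U_i(\bar r_i(a))$ — the constant offset $c(|\mathcal{A}|^K-1)$ drops out — which is exactly maximizing $\sum_i U_i(\bar r_i(a))$ over action sequences $a$. An action sequence $a = (a(1),\ldots,a(K))$ induces the empirical frequency $p(a') = \frac{1}{K}\left|\{k : a(k)=a'\}\right| \in \{0,\frac{1}{K},\ldots,1\}$ with $\bar r_i(a) = \sum_{a'} p(a') f_i(a')$, so each sequence induces a feasible point of \eqref{formulation2}, and conversely every feasible $p$ taking values in $\{0,\frac{1}{K},\ldots,1\}$ arises from some $K$-sequence. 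Applying Young's theorem then yields that the stochastically stable states are precisely the content states whose action sequence optimizes \eqref{formulation2}.

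The hardest step will be the tree-surgery computation of $\gamma(a,\vec 1)$ and the lower bound $\gamma(a,q) \ge |\mathcal{A}|^K c$: one must verify that rerouting through the all-discontent hub never increases total tree resistance, so that the claimed tree is genuinely of minimum resistance rather than merely an upper bound. This is where interdependence does the real work, and where I would lean on the analogous tree-construction argument (Lemma~4.3 in \cite{marden2014achieving}) to control the combinatorics of spanning trees over the enlarged state space $\mathcal{A}^K \times \{0,1\}^N$.
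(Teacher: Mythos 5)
You have not proved the statement you were asked to prove; you have assumed it. The statement here is Young's characterization theorem itself --- the black box the paper imports from \cite{RePEc:ecm:emetrp:v:61:y:1993:i:1:p:57-84} --- and your opening move, ``invoke the characterization of stochastically stable states as those minimizing the stochastic potential,'' is precisely that theorem. Everything that follows is an \emph{application} of it: the resistance bookkeeping for G-NUM, the potentials $\gamma(a,\vec{1}) = c\left(|\mathcal{A}|^K-1\right) + N - \sum_i U_i(\cdot)$, $\gamma(a,\vec{0}) = |\mathcal{A}|^K c$, the bound $\gamma(a,q)\geq |\mathcal{A}|^K c$ for mixed $q$, and the identification of $K$-sequences with empirical frequencies in $\{0,\tfrac{1}{K},\dots,1\}$. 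That material reconstructs, essentially verbatim, the paper's Lemma~\ref{lemma: states with r_min} and the surrounding proof of Theorem~\ref{thm optimality of algorithm 1} (including the reliance on Lemma~4.3 of \cite{marden2014achieving}), and it is correct as such --- but it proves the downstream optimality result \eqref{formulation2}, not the quoted theorem. As a proof of the statement, the attempt is circular, and no amount of tightening the tree-surgery computation repairs that, because the difficulty you flag (verifying minimality of the rooted trees in G-NUM's chain) lives one level below the statement, which is a general fact about \emph{all} regular perturbed chains.

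A genuine proof of the statement runs through the Markov chain tree theorem (the Freidlin--Wentzell lemma \cite{freidlin2012random}): for an irreducible finite chain, $\pi_{\epsilon}(x)$ is proportional to $\sum_{T\in\mathcal{T}_x}\prod_{(u,v)\in T}P^{\epsilon}(u,v)$, the sum over spanning trees rooted at $x$. Regularity of the perturbation gives $P^{\epsilon}(u,v)=\Theta\left(\epsilon^{r(u,v)}\right)$, so each tree contributes $\Theta\left(\epsilon^{\rho(T)}\right)$; the numerator for $x$ is therefore $\Theta\left(\epsilon^{\gamma(x)}\right)$ and the normalizing constant is $\Theta\left(\epsilon^{\gamma_{\min}}\right)$, whence $\pi_{\epsilon}(x)=\Theta\left(\epsilon^{\gamma(x)-\gamma_{\min}}\right)$. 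This limit is strictly positive exactly when $\gamma(x)=\gamma_{\min}$ and zero otherwise, and one additionally checks that $\lim_{\epsilon\to 0}\pi_{\epsilon}$ exists and is a stationary distribution of the unperturbed kernel $P^{0}$, as in Young's original argument. None of this appears in your proposal. Note that the paper itself offers no proof either --- it cites Young --- so the honest options were to do the same, or to supply the tree-theorem argument above; re-deriving Theorem~\ref{thm optimality of algorithm 1} does neither.
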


The above theorem insists that the stochastically stable states of the Markov chain $X_{\epsilon}(t)$ are the states where all nodes are content and that minimizes $\gamma(x)$, i.e.,
{\scriptsize\[ \sum_i 1-U_i\left(\frac{r_i(a(1))+\cdots+r_i(a(K))}{K}\right). \]}
This implies that the stochastically stable states are those that maximize
{\scriptsize\[ \sum_i U_i\left(\frac{r_i(a(1))+\cdots+r_i(a(K))}{K}\right). \]}
This completes the proof of Theorem~\ref{thm optimality of algorithm 1}.
\qed

\subsection{Proof of Lemma \ref{lemma bounded dual parameter}}
\label{appendix Proof of lemma bounded dual parameter}
Proof is by induction over $l$. The statement is true for $l=0$ (by assumption). Now, we assume that $\lambda_i(l)\in[0, V+1]$. Consider the two cases:
\begin{enumerate}
\item $\lambda_i(l)\leq V. $
In this case, from the update rule, we have
\begin{align*}
\lambda_i(l+1)&=\left[\lambda_i(l)+b(l)\left(\bar{r}_i(l) -r_i(l)\right)\right]^+ \\
&\leq \lambda_i(l) + \bar{r}_i(l) \leq V+1 
\end{align*}
 \item $\lambda_i(l) \in (V \;V+1]$. In this case, 
\begin{align*}
\frac{d}{dy}(U_i(y)-\lambda_i y) 
 \leq U^{\prime}_i(0) - \lambda_i  < 0 \;\; ,\forall  y
\end{align*} 
 The steps follow since $U_i$ is concave and $\lambda_i<V+1$. This implies $U_i(y)-\lambda_i y$ is a decreasing function in $[0,1]$. i.e $\bar{r}_i=0$. $\implies \lambda_i(l+1)\leq V+1$.
\end{enumerate}

\subsection{Mixing time bounds using Dobrushin's inequality}
The Markov chain induced by the algorithm, $X_{\epsilon}(t)$ is a non reversible 
ergodic Markov chain. To analyze the performance of C-NUM, we study the 
mixing time of the Markov chain $X_{\epsilon}(t)$ for a fixed $\lambda$. 
We will use Dobrushin's inequality \cite{opac-b1094914} to derive an upper bound 
on the mixing time. In this section, we will discuss Dobrushin's inequality 
and mixing time based on it. We define 
ergodic coefficient \cite{opac-b1094914} as,
\begin{defn}{Ergodic Coefficient}
The ergodic coefficient of a transition probability matrix $P$, $\delta(P)$ is defined as,
\begin{align*}
\delta(P)=\frac{1}{2} \sup_{i,j} \sum_k |p_{ik}-p_{jk}|
= \sup_{i,j} d_V(p_i,p_j)
\end{align*} 
where, $d_V(\cdot,\cdot)$ is the total variational distance.
\end{defn}
We shall now state the Dobrushin inequality ,
\begin{theorem}{Dobrushin's Inequality:}
Let $P_1$ and $P_2$ be stochastic matrices. Then,
\begin{align*}
\delta(P_1P_2) \leq \delta(P_1) \delta(P_2)
\end{align*}
\begin{proof}
See Theorem $7.1$, Chapter $6$ in \cite{opac-b1094914}
\end{proof}
\end{theorem}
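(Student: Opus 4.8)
The plan is to reduce the submultiplicativity $\delta(P_1 P_2) \le \delta(P_1)\delta(P_2)$ to a single \emph{contraction lemma}: for any row vector $v$ whose entries sum to zero, $\|v P\|_1 \le \delta(P)\,\|v\|_1$, where $\|\cdot\|_1$ denotes the $\ell_1$ norm. This lemma asserts that a stochastic matrix contracts the $\ell_1$ norm of balanced (signed) vectors by exactly the ergodic coefficient, and it is the natural reformulation of the given definition $\delta(P)=\sup_{i,j} d_V(p_i,p_j)$, where $p_i$ is the $i$-th row of $P$. Throughout I would use the standard correspondence $\|\mu-\nu\|_1 = 2\,d_V(\mu,\nu)$ between $\ell_1$ distance and total variation.

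First I would prove the contraction lemma. Write $v = v^+ - v^-$ as the difference of its nonnegative and nonpositive parts. Since the entries of $v$ sum to zero, both parts carry the same total mass $m = \tfrac12\|v\|_1$; assuming $m>0$ (otherwise $v=0$ and the claim is trivial), set $\mu = v^+/m$ and $\nu = v^-/m$, which are genuine probability distributions. Then $vP = m(\mu P - \nu P)$, so it suffices to show $d_V(\mu P,\nu P) \le \delta(P)$. Expanding each push-forward over the rows of $P$ and using $\sum_i \mu_i = \sum_j \nu_j = 1$, the key algebraic identity is
\begin{align*}
\mu P - \nu P = \sum_{i,j}\mu_i\nu_j\,(p_i - p_j).
\end{align*}
Taking $\ell_1$ norms, applying the triangle inequality, and bounding each $\|p_i - p_j\|_1 = 2\,d_V(p_i,p_j) \le 2\delta(P)$ gives $\|\mu P - \nu P\|_1 \le 2\delta(P)\sum_{i,j}\mu_i\nu_j = 2\delta(P)$, i.e. $d_V(\mu P,\nu P)\le \delta(P)$. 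Multiplying back by $m$ and using $\|v\|_1 = 2m$ yields the lemma.

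The theorem then follows quickly. For any two row indices $i,j$, the vector $e_i - e_j$ sums to zero, and so does $(e_i - e_j)P_1$, because $P_1\mathbf{1} = \mathbf{1}$ forces $(e_i-e_j)P_1\mathbf{1}=0$. Applying the contraction lemma twice — first to $P_2$ acting on the sum-zero vector $(e_i-e_j)P_1$, then to $P_1$ acting on $e_i - e_j$ — gives
\begin{align*}
\|(e_i - e_j)P_1 P_2\|_1 \le \delta(P_2)\,\|(e_i - e_j)P_1\|_1 \le \delta(P_1)\delta(P_2)\,\|e_i - e_j\|_1 = 2\,\delta(P_1)\delta(P_2).
\end{align*}
Taking the supremum over $i,j$ and dividing by $2$ recovers $\delta(P_1 P_2)\le \delta(P_1)\delta(P_2)$.

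The main obstacle is the contraction lemma, and specifically the step of rewriting the difference of the two push-forwards as the double sum $\sum_{i,j}\mu_i\nu_j(p_i - p_j)$. This coupling-like expansion is precisely what lets the pairwise bound $d_V(p_i,p_j)\le\delta(P)$ enter the estimate, and it is also where the sum-to-zero hypothesis on $v$ is genuinely used, since that hypothesis is what makes $\mu$ and $\nu$ honest probability vectors of equal mass. Everything else is routine bookkeeping with the $\ell_1$/total-variation identity.
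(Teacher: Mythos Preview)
Your proof is correct and complete. The paper itself does not supply an argument for this theorem --- it simply cites Theorem~7.1, Chapter~6 of Br\'emaud's text --- so there is nothing to compare against; your contraction-lemma route via the coupling expansion $\mu P - \nu P = \sum_{i,j}\mu_i\nu_j(p_i - p_j)$ is in fact the standard textbook proof and is exactly what one would find in the cited reference.
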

We will now use the above inequality to obtain Mixing time bounds.  
\begin{theorem}
\begin{align*}
d_V(\mu^T P^n,\nu^TP^n) 
 \leq d_V(\mu,\nu) \left(\delta(P)\right)^n
\end{align*}
\begin{proof}
See Theorem $7.2$, Chapter $6$ in \cite{opac-b1094914}
\end{proof}
\end{theorem}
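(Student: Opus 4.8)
The plan is to deduce the claim from two ingredients: the submultiplicativity of the ergodic coefficient, which is exactly Dobrushin's inequality established in the preceding theorem, and a one-step contraction estimate for the total variation distance under a stochastic matrix. Iterating Dobrushin's inequality gives $\delta(P^n) \le (\delta(P))^n$ by a trivial induction on $n$. Hence it suffices to prove the single contraction bound
\begin{align*}
d_V(\mu^T Q, \nu^T Q) \le \delta(Q)\, d_V(\mu,\nu)
\end{align*}
for an arbitrary stochastic matrix $Q$, and then apply it with $Q = P^n$ to obtain $d_V(\mu^T P^n, \nu^T P^n) \le \delta(P^n)\, d_V(\mu,\nu) \le (\delta(P))^n d_V(\mu,\nu)$.

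To establish the contraction bound, I would work with the signed difference $w = \mu - \nu$. Since $\mu$ and $\nu$ are probability vectors, $w$ sums to zero, and $d_V(\mu,\nu) = \frac{1}{2}\sum_k |w_k|$. Writing $w = w^+ - w^-$ for its positive and negative parts, the zero-sum property forces $\sum_k w_k^+ = \sum_k w_k^- = m$, where $m := d_V(\mu,\nu)$. If $m = 0$ the inequality is trivial, so assume $m > 0$ and set $\alpha = w^+/m$ and $\beta = w^-/m$, which are genuine probability vectors. Then $w^T Q = m(\alpha^T Q - \beta^T Q)$, so it remains to show $d_V(\alpha^T Q, \beta^T Q) \le \delta(Q)$.

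The key manipulation is to use $\sum_i \alpha_i = \sum_j \beta_j = 1$ to write the difference of the one-step distributions as a double convex combination of differences of rows of $Q$. Denoting the $i$-th row of $Q$ by $q_i$, I have $\alpha^T Q = \sum_i \alpha_i q_i$ and $\beta^T Q = \sum_j \beta_j q_j$, whence
\begin{align*}
\alpha^T Q - \beta^T Q = \sum_{i,j} \alpha_i \beta_j (q_i - q_j).
\end{align*}
Applying the triangle inequality for the $\ell_1$ norm and the definition of the ergodic coefficient gives $\|\alpha^T Q - \beta^T Q\|_1 \le \sum_{i,j}\alpha_i\beta_j \, \|q_i - q_j\|_1 = \sum_{i,j}\alpha_i\beta_j \cdot 2\,d_V(q_i,q_j) \le 2\delta(Q)\sum_{i,j}\alpha_i\beta_j = 2\delta(Q)$, so that $d_V(\alpha^T Q,\beta^T Q) \le \delta(Q)$. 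Combining this with $w^T Q = m(\alpha^T Q - \beta^T Q)$ yields the one-step contraction, and the theorem follows.

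I expect the main obstacle to be the step producing the representation $\alpha^T Q - \beta^T Q = \sum_{i,j}\alpha_i\beta_j(q_i - q_j)$: it is precisely here that the zero-sum, equal-mass structure of $w$ is exploited, and getting the normalization right (so that $\alpha,\beta$ are honest distributions and the weights $\alpha_i\beta_j$ sum to one) is what licenses the bound by $\delta(Q)$. Everything else is routine manipulation of the total variation distance together with the already-available submultiplicativity of $\delta$.
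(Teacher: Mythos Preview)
Your argument is correct. The paper does not actually supply a proof of this statement: it simply cites Theorem~7.2, Chapter~6 of \cite{opac-b1094914}, so there is no in-paper argument to compare against. What you have written is precisely the standard proof one finds in that reference---the Jordan decomposition of $\mu-\nu$ into normalized probability vectors $\alpha,\beta$, the double-averaging identity $\alpha^TQ-\beta^TQ=\sum_{i,j}\alpha_i\beta_j(q_i-q_j)$, and then the triangle inequality combined with the definition of $\delta(Q)$---followed by iteration via Dobrushin's submultiplicativity from the preceding theorem.
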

As a corollary, we have,
\begin{align}
\label{bound on tvd}
d_V(\pi_t,\pi) = d_V(\pi_0^TP^t,\pi^TP^t) 
 \leq d_V(\pi_0,\pi) \left(\delta(P)\right)^t
\end{align}
where, $\pi_0$ and $\pi_t$ are the distribution of the Markov chain at times $0$ and
$t$ respectively. The above result indicates that characterizing $\delta(P)$ shall 
provide bounds on the convergence.

\subsection{Proof of Lemma \ref{lemma mixing time}:}
\label{Mixing time of Resource allocation Markov chain}
In this subsection, we shall obtain bounds on the ergodic coefficient $\delta(P)$ and
hence mixing time of the Markov chain $X_{\epsilon}$ for a fixed $\lambda$.
The total variation distance is given by,
\begin{align*}
d_V(p_i,p_j)
= 1-\sum_k p_{ik}\wedge p_{jk} 
\end{align*}
Using the above in the definition of ergodic  coefficient,
\begin{align}
\label{bound on ergodic coeff}
\delta(P) &= 1-\inf_{i,j} \sum_k p_{ik} \wedge p_{jk}  
\leq 1-\sum_k p_{\min,k}
\end{align}
where, $p_{\min,k}=\min_{i} p_{ik}$ is the minimum transition probability to state $k$. 
\par Consider the Markov chain $X_{\epsilon}$ in C-NUM with fixed $\lambda$.
Let $k$ be the a state where all the nodes are content.  
The minimum transition probability to $k$ would correspond to all the $N$ nodes 
becoming discontent with probability $\frac{\epsilon^{cN}}{|\mathcal{A}|}$
and becoming content with probability $\epsilon^{(N-\sum_i \frac{\lambda_i r_i}{\lambda_{\max}})}> \epsilon^N$ (assuming $\epsilon<0.5$).
Thus with a minimum transition probability of 
$\frac{\epsilon^{cN+N}}{|\mathcal{A}|}. $
 Also, note that there are $|\mathcal{A}|$ such transitions. 
Applying the above in \eqref{bound on ergodic coeff}, we have,
\begin{align*}
\delta(P_{\epsilon}) \leq 1- \epsilon^{cN+N} 
\end{align*}
Using \eqref{bound on tvd} from the previous subsection, we have,
\begin{align}
\label{bound on tv distance}
d_V(\pi_t,\pi_{\epsilon}) \leq \left(1- \epsilon^{(c+1)N}\right)^t
\end{align}
From the above, we have,
\begin{align*}
\tau(\zeta) \leq \frac{log(1/\zeta)}{\epsilon^{(c+1)N}}
\end{align*}
\subsubsection{Mixing time bound for G-NUM}
In G-NUM, a state contains $K$ action profiles and only one action profile 
could possibly change in a transition. So, we bound $\delta(P^K)$ instead of 
$\delta(P)$. From the discussion above, we know that,
\begin{align*}
\delta(P^K) = 1- \inf_{i,j} \sum_k p^K_{ik} \wedge p^K_{jk}  
 \leq 1-\sum_k p^K_{\min,k}
\end{align*}
where, $p^K_{ik}$ is the $K$ step transition probability from $i$ to $k$ and 
$p^K_{\min,k}=\min_{i} p^K_{ik}$.
\begin{align*}
\text{Also, }d_V(\pi_t,\pi_{\epsilon}) \leq d_V(\pi_0,\pi_{\epsilon}) \delta(P^K)^{\lfloor \frac{t}{K}\rfloor}
\end{align*}
From any state, the minimum $K$ step transition probability to a state with all the nodes content
is $\frac{\epsilon^{cNK+NK}}{|\mathcal{A}|^K}$. This corresponds to a transition where, in each
step all the nodes becomes discontent with probability $\epsilon^c/|\mathcal{A}|$ and 
becomes content with probability $\epsilon^N$ and there are $K$ such transitions.
Thus for G-NUM, we have,
\begin{align*}
d_V(\pi_t,\pi_{\epsilon}) \leq (1- \epsilon^{(c+1)NK})^{\lfloor \frac{t}{K} \rfloor K} 
\end{align*}
and the mixing time is bounded by,
{\small \begin{align*}
\tau(\zeta) \leq \left\lceil \frac{\log(1/\zeta)}{K \epsilon^{(c+1)NK}} \right\rceil K 
\end{align*}
}
\qed

\subsection{Proof of theorem \ref{optimality of Alg2}:}
\label{proof optimality of alg2}
The proof follows the standard approximate subgradient algorithm in
\cite{doi:10.1137/S1052623400376366}, if we assume that the Markov chain
$X_{\epsilon}$ converges to its stationary distribution while updating the weights in
\eqref{subgradient update}. We follow the analysis in \cite{5625654} except that the update in \eqref{subgradient update} with $s_i(t)$ replaced by the 
payoff averaged over the stationary distribution of $X_{\epsilon}$ is an approximate
subgradient.
\par Let $\delta(\lambda)$ denote the error in the subgradient, i.e.
\begin{align}
\delta(\lambda) =  \max_a \sum_i r_i(a) \lambda_i - \sum_a p(a,\lambda) \sum_i r_i(a) \lambda_i \nonumber\\
\label{error due to Marden young}
\implies \sum_i s_i(\lambda) \lambda_i = \max_a \sum_i r_i(a) \lambda_i -\delta(\lambda),
\end{align}
 where, $p(a,\lambda)$ is the stationary distribution of the Markov chain 
($X_{\epsilon}$) and $s_i(\lambda) = \sum_a p(a,\lambda) r_i(a)$ is the service rate 
obtained with fixed $\lambda$. 
Since,
\begin{align}
&\bar{r}_i(l) = arg\max_{\alpha\in [0,1]} U_i(\alpha)-\alpha\lambda_i(l), \nonumber\\
\label{utility vs optimal}
&\text{we have,  }U_i(\bar{r}_i(l)) - \bar{r}_i(l) \lambda_i(l) \geq U_i(\bar{r}_i^*) - \bar{r}_i^* \lambda_i(l) \\
&\text{where, $r^*$ is the optimal solution of \eqref{Formulation alg2}. } \nonumber \\
&\text{Also, } \sum_i \bar{r}_i^* \lambda_i(l) \leq \max_a \sum_i r_i(a) \lambda_i(l). \nonumber
\end{align}
Substituting the above in \eqref{utility vs optimal} and summing over $i$, we get,
\begin{align}
\label{utility bounds}
\sum_i \bar{r}_i(l) \lambda_i(l) \leq \sum_i U_i(\bar{r}_i(l)) - U_i(\bar{r}_i^*) + \max_a \sum_i r_i(a) \lambda_i(l).
\end{align}
From \eqref{error due to Marden young} and \eqref{utility bounds}, we have, 
\begin{align}
\label{error due to Marden young and utility bounds}
\begin{aligned}
2 b(l)\sum_i \lambda_i(l)  \left(\bar{r}_i-s_i(\lambda(l))\right) \leq  2b(l)\sum_i U_i(\bar{r}_i(l))& \\
 - 2b(l) U_i(\bar{r}_i^*) +2b(l) \delta(\lambda)&.
\end{aligned}
\end{align}
 Then, for node $i$, we have, 
\begin{align}
\lambda_i^2(l+1) = &\left(\left[\lambda_i(l)+ b(l)\left(\bar{r}_i(l)-  s_i(l)\right)\right]^+\right)^2 \nonumber \\
\lambda_i^2(l+1) \leq& \lambda^2_i(l) + 2b(l) \lambda_i(l)\left( \bar{r}_i(l)-  s_i(l)) \right) \nonumber\\
&+b^2(l)(\bar{r}_i(l)-s_i(l))^2 \nonumber\\
\label{drift bound}
\leq & \lambda_i^2(l) + 2 b(l)\lambda_i(l) \left(\bar{r}_i(l)-s_i(l)\right)+ b^2(l).
\end{align}
Summing $\eqref{drift bound}$ over all the nodes, we get,
\begin{align}
\label{eqn sumdrift}
\begin{aligned}
\sum_i \lambda_i^2(l+1)) \leq& \sum_i \lambda_i^2(l) +N b^2(l) + 2b(l) e(l)   \\
&+2b(l) \sum_i \lambda_i(l) \left(\bar{r}_i(l)-s_i(\lambda(l))\right),
\end{aligned}
\end{align}
where, $e(l)=  \sum_i \lambda_i(l)\left(s_i(\lambda(l))-s_i(l)\right) $ is the error 
due to the fact that the Markov chain $X_{\epsilon}$ has not converged to 
its stationary distribution.
Substituting \eqref{error due to Marden young and utility bounds} in \eqref{eqn sumdrift}, we get,
\begin{align}
\sum_i \lambda_i^2(l+1)) \leq & \sum_i \lambda_i^2(l) +N b^2(l) +2b(l) e(l) +2b(l) \delta(\lambda)   \nonumber \\
\label{drift in terms of utility}
 &  + 2b(l)\sum_i \left(U_i(\bar{r}_i(l)) -  U_i(\bar{r}_i^*)\right). 
\end{align}

Next we shall consider two cases step sizes of $b(l)$. 

\subsubsection{Decreasing step size}
Choose $b(l)$ such that,
\begin{align*}
&\sum_t b(l) = \infty, \; \text{ and } \sum_t b^2(l)< \infty \\
&\text{We define the following Cesaro averages, } \\
&\bar{b}(L)= \sum_{l=0}^{L-1} b(l), \; \bar{U}(L)= \frac{1}{\bar{b}(L)} \sum_{l=0}^{L-1} b(l) \sum_i U_i(\bar{r}_i(l)),\\
&\bar{\delta}(L)= \frac{1}{\bar{b}(L)} \sum_{l=0}^{L-1} b(l)\delta(\lambda(l)), \; \hat{\bar{r}}(L)=\frac{1}{\bar{b}(L)} \sum_{l=0}^{L-1} b(l)\bar{r}(l).
\end{align*}
Summing \eqref{drift in terms of utility} from $l=0$ to $l=L-1$ and normalizing, 
\begin{align}
&\frac{1}{\bar{b}(L)}\sum_{l=0}^{L-1} \sum_i \lambda_i^2(l+1) \leq  \frac{1}{\bar{b}(L)} \sum_{l=0}^{L-1}  \sum_i \lambda_i^2(l) \nonumber\\ 
&+ \frac{2}{\bar{b}(L)}\sum_{l=0}^{L-1}\sum_i \left( b(l) U_i(\bar{r}_i(l)) - b(l) U_i(\bar{r}_i^*)\right) \nonumber \\
  &+ \frac{2}{\bar{b}(L)}\sum_{l=0}^L \left( b(l)\delta(\lambda(l)) +N b^2(l) +b(l) e(l)\right)\nonumber \\
\implies &\frac{1}{2\bar{b}(L)} \sum_i \lambda_i^2(L)- \lambda_i^2(0)  \leq  \bar{U}(L) -  \sum_i U_i(\bar{r}_i^*) \nonumber \\
\label{cesaro average drift bound}
 &  +\bar{\delta}(\lambda(l)) +\frac{N}{\bar{b}(L)} \sum_{l=0}^{L-1} b^2(l)  +  \frac{1}{\bar{b}(L)} \sum_{l=0}^L b(l) e(l) \\
&\text{Consider, }
\frac{1}{\bar{b}(L)} \sum_{l=0}^L b(l) e(l) = \frac{1}{\bar{b}(L)} S(L) + E(e(l)|\mathcal{F}_l), \nonumber
\end{align}
where, $S(L)=\sum_{l=0}^{L-1} b(l) (e(l) -E(e(l)|\mathcal{F}_l))$ is an $\mathcal{F}_L$ martingale. Here $\mathcal{F}_l$ denotes the filtration until frame $l$.
 \begin{align*}
  &E(S(L)-S(L-1))^2 =  E(b(L)(e(L) -E(e(L)|\mathcal{F}_L)))^2 \\
 &\implies \sum_L E(S(L)-S(L-1))^2 \leq N(V+1)\sum_L b^2(L) 
 < \infty
 \end{align*}
 
 By martingale convergence theorem \cite{williams1991probability}, $\lim_{L\to \infty}S(L)$ converges $a.s$. 
 This implies, 
 \begin{align}
 \label{martingale convergence thm}
\lim_{L\to \infty} \frac{1}{\bar{b}(L)} \sum_{l=0}^L b(l)(e(l) -E(e(l)|\mathcal{F}_l)) =0
 \end{align}
 \begin{align*}
&E(e(l)|\mathcal{F}_l) \stackrel{(a)}{\leq} (V+1) \sum_i  E(s_i(l)|\mathcal{F}_l) - s_i(\lambda(l)) \\
&= \frac{V+1}{T}  \sum_{t=(l-1)T}^{lT} \sum_{i,a} \left(r_i(a) \pi(a,t) - r_i(a) \pi(a,\lambda(l))\right)  \\
& \stackrel{(b)}{\leq} \frac{N(V+1)}{T} \sum_{t=(l-1)T}^{lT} d_v(\pi(a,t),\pi(\lambda(l))) 
 \stackrel{(c)}{\leq} \frac{1}{T} \frac{N(V+1)}{\epsilon^{cN+N}}, 
 \end{align*}
 where, $(a)$  and $(b)$ follows, since for all $i$, $\lambda_i(t)<V+1$ and 
 $r_i\leq1$ ; $(c)$ follows from the mixing time bound 
 in \eqref{bound on tv distance}. 
\par By our choice of frame size $T=\frac{N(V+1)}{\eta\epsilon^{cN+N}}$, we have,
 \begin{align}
 \label{mixing bound}
 E(e(l)|\mathcal{F}_l)<\eta
 \end{align}
 
Taking limit $L \to \infty$, in \eqref{cesaro average drift bound} and 
using \eqref{martingale convergence thm}, \eqref{mixing bound} we get,
\begin{align*}
\lim\inf_{L \to \infty} \bar{U}(L) \geq& \sum_i U_i(\bar{r}_i^*) 
 - \lim\inf_{L \to \infty} \bar{\delta}(L) - \eta
\end{align*}
Also by Jensen's inequality, we have,
\begin{align*}
\sum_i U_i(\hat{\bar{r}}_i(L)) &\geq \bar{U}(L), \;\; \forall L.  \\
\implies \lim\inf_{L \to \infty}  \sum_i U_i(\hat{\bar{r}}_i(L)) &\geq \sum_i U_i(\bar{r}_i^*) 
- \lim\inf_{L \to \infty}  \bar{\delta}(L) - \eta.
\end{align*}
The above algorithm is an approximate sub-gradient method discussed in \cite{doi:10.1137/S1052623400376366}. By lemma $2.1$ in \cite{doi:10.1137/S1052623400376366}, we have,
\begin{align*} 
\lim\sup_{l \to \infty}  \sum_i U_i(\bar{r}_i(l)) \geq  \sum_i U_i(\bar{r}_i^*) + \lim\inf_{L \to \infty} \bar{\delta}(L) + \eta.
\end{align*} 
\subsubsection{Fixed step size ($b(l)=b)$}
When $b(l)$ is a constant,
\begin{align*}
\sum_i &\lambda_i^2(l+1) \leq \sum_i \lambda_i^2(l) + 2b\sum_i \left(U_i(\bar{r}_i(l)) -  U_i(\bar{r}_i^*)\right)  \\
 &+2b \delta(\lambda(l)) +N b^2  
 +2b e(l) \text{ (Using \eqref{drift in terms of utility}). }
\end{align*} 
Averaging from $l=0$ to $L-1$,  we get,
\begin{align}
& \frac{1}{L}\sum_{l=0}^{L-1}\sum_i \lambda_i^2(l+1) \leq \frac{1}{L}\sum_{l=0}^{L-1}\sum_i \lambda_i^2(l) + \frac{2b}{L} \sum_{l=0}^{L-1} \delta(\lambda(l))+Nb^2 \nonumber\\ 
&+ \frac{2b}{L} \sum_{l=0}^{L-1} e(l)+\frac{2b}{L}\sum_{l=0}^{L-1} \sum_i U_i(\bar{r}_i(l))- 2b\sum_i U_i(\bar{r}^*).  \nonumber \\
&\implies \frac{1}{2bL} \sum_i (\lambda^2_i(L)-\lambda^2_i(0)) \leq  \frac{1}{L}\sum_{l=0}^{L-1} \sum_i U_i(\bar{r}_i(l))\nonumber \\
\label{drift bound fixed b}
& - \sum_i U_i(\bar{r}^*)  
 + \frac{1}{L} \sum_{l=0}^{L-1} \delta(\lambda(l))+\frac{Nb}{2} 
+ \frac{1}{L} \sum_{l=0}^{L-1} e(l). \\
&{\small\text{Consider,}  \frac{1}{L}\sum_{l=0}^{L-1} e(l)  = \frac{1}{L} \sum_{l=0}^{L-1} (e(l) -E(e(l)|\mathcal{F}_t)) +  E(e(l)|\mathcal{F}_t), \nonumber
}
 \end{align}
 
Recall $S(L)=\sum_{l=0}^{L-1} \frac{(e(l)-E(e(l)|\mathcal{F}_t))}{l}$, with $b(l)=\frac{1}{l}$.
We know that $S(L)$ converges a.s. By Kroneker's lemma \cite{williams1991probability}, we have, almost surely,
\begin{align*}
\lim_{L\to\infty} \frac{1}{L} \sum_{l=0}^{L-1} (e(l) -E(e(l)|\mathcal{F}_t)) =0
\end{align*} 
Also by our choice of step size, we have,
$E(e(l)|\mathcal{F}_t))<\eta $
Now, using the above and taking limit $L \to \infty$ in \eqref{drift bound fixed b}, 
\begin{align*}
\lim &\inf_{L\to \infty}   \frac{1}{L} \sum_{l=0}^{L-1}\sum_i U_i(\bar{r}_i(l)) \geq \sum_i U_i(\bar{r}^*) \\
& -\lim\inf_{L\to\infty} \frac{1}{L}\sum_{l=0}^{L-1} \delta(\lambda(l)) - \eta -\frac{Nb}{2}.
 \end{align*}
 By Jensen's inequality, we get,
 \begin{align*}
 \lim &\inf_{L\to \infty}   \sum_i U_i\left( \frac{1}{L} \sum_{l=0}^{L-1}\bar{r}_i(l)\right) \geq \sum_i U_i(\bar{r}^*) \\
& -\lim\inf_{L\to\infty} \frac{1}{L}\sum_{l=0}^{L-1} \delta(\lambda(l)) - \eta -\frac{Nb}{2}.
 \end{align*}
 By lemma $2.1$ in \cite{doi:10.1137/S1052623400376366}, we have,
 \begin{align*}
 \lim&\sup_{l\to\infty} \sum_i U_i(\bar{r}_i(l)) \geq \sum_i U_i(\bar{r}^*) \\
 & - \lim\inf_{L \to \infty} \frac{1}{L}\sum_{l=0}^{L-1} \delta(\lambda(l))-\eta-\frac{Nb}{2}.
 \end{align*}
\bibliography{mybib_journal}
\bibliographystyle{IEEEtran}
\end{document}